\newcommand{\be}{\begin{equation}}
\newcommand{\ee}{\end{equation}}
\newcommand{\bea}{\begin{eqnarray}}
\newcommand{\eea}{\end{eqnarray}}
\def\squareforqed{\hbox{\rlap{$\sqcap$}$\sqcup$}}
\def\qed{\ifmmode\squareforqed\else{\unskip\nobreak\hfil
\penalty50\hskip1em\null\nobreak\hfil\squareforqed
\parfillskip=0pt\finalhyphendemerits=0\endgraf}\fi}
\def\endenv{\ifmmode\;\else{\unskip\nobreak\hfil
\penalty50\hskip1em\null\nobreak\hfil\;
\parfillskip=0pt\finalhyphendemerits=0\endgraf}\fi}
\newcommand{\tr}{\text{tr}}
\newcommand{\I}{\mathbbm{1}}
\newcommand{\la}{\langle}
\newcommand{\ra}{\rangle}
\newcommand{\cell}[1]{\begin{tabular}{@{}l@{}}#1\end{tabular}}
\newtheorem*{rep@theorem}{\rep@title}
\newcommand{\newreptheorem}[2]{%
\newenvironment{rep#1}[1]{%
 \def\rep@title{#2 \ref{##1}}%
 \begin{rep@theorem}}%
 {\end{rep@theorem}}}
\newtheorem{thm}{Theorem}
\newtheorem{lemma}{Lemma}
\newtheorem{definition}{Definition}
\newtheorem{result}{Result}
\begin{document}



\title{Quantum contextuality provides communication complexity advantage}


\author{Shashank Gupta}
\affiliation{S. N. Bose National Centre for Basic Sciences,
Block JD, Sector III, Salt Lake, Kolkata 700106, India}
\author{Debashis Saha}
\affiliation{S. N. Bose National Centre for Basic Sciences,
Block JD, Sector III, Salt Lake, Kolkata 700106, India}
\affiliation{School of Physics, Indian Institute of Science Education and Research Thiruvananthapuram, Kerala 695551, India}
\author{Zhen-Peng Xu}
\affiliation{School of Physics and Optoelectronics Engineering, Anhui University, 230601 Hefei, People’s Republic of China}
\affiliation{Naturwissenschaftlich-Technische Fakult\"{a}t, Universit\"{a}t Siegen,
Walter-Flex-Stra{\ss}e 3, 57068 Siegen, Germany}
\author{Ad\'an~Cabello}
\affiliation{Departamento de F\'{\i}sica Aplicada II, Universidad de Sevilla, E-41012 Sevilla,
Spain}
\affiliation{Instituto Carlos~I de F\'{\i}sica Te\'orica y Computacional, Universidad de
Sevilla, E-41012 Sevilla, Spain}
\author{A. S. Majumdar}
\affiliation{S. N. Bose National Centre for Basic Sciences,
Block JD, Sector III, Salt Lake, Kolkata 700106, India}


\begin{abstract}
Despite the conceptual importance of contextuality in quantum mechanics, there is a hitherto limited number of applications requiring contextuality but not entanglement.
Here, we show that for any quantum state and observables of sufficiently small dimensions producing contextuality, there exists a communication task with quantum advantage. Conversely, any quantum advantage in this task admits a proof of contextuality whenever an additional condition holds. We further show that given any set of observables allowing for quantum state-independent contextuality, there exists a class of communication tasks wherein the difference between classical and quantum communication complexities increases as the number of inputs grows. Finally, we show how to convert each of these communication tasks into a semi-device-independent protocol for quantum key distribution.
\end{abstract}


\maketitle


\textit{Introduction.---}Contextuality is one of the most significant properties of quantum mechanics \cite{Bell:1966RMP,Kochen:1967JMM,Klyachko:2008PRL,Cabello2008,Cabello2014,Budroni:RMP2022}. It stipulates that, for some correlations, there is no probability distribution in agreement with the marginal distributions corresponding to sets of compatible (i.e., jointly measurable) observables. In particular, contextuality forbids us to assign predetermined context-independent values to the outcomes of quantum sharp measurements (defined as those that yield the same outcome when they are repeated and do not disturb any compatible observable). 
While nonlocality which can be seen as a form of quantum contextuality requiring entanglement, has found many applications in quantum communication \cite{Ekert1991,RevModPhys.82.665,Cubitt2010}, so far, entanglement unassisted quantum contextuality has found few applications despite of its conceptual importance
\cite{Kleinmann2011,Howard2014,Cabello2018,Grudka2014,Singh2017,Saha2019,bharti,Saha2020,PRXQuantum2022} \footnote{There is other notion of contextuality, preparation contextuality \cite{Spekkens2005}, which has found application in oblivious multiplexing and state discrimination tasks \cite{Spekkens2009,Pan2018,PRXSchmid}}.

Here, we first show that any contextual correlations achieved using quantum systems of sufficiently small dimensions offer a quantum advantage in a suitably designed one-way communication complexity (or distributed computation) task. Conversely, whenever an additional condition holds, any quantum protocol providing advantage in those tasks produces a proof of contextuality. By itself, this result provides an operational way to understand the sense in which some famous forms of quantum contextuality (notably, the one produced by the violation of the Klyachko-Can-Binicio\u{g}lu-Shumovsky inequality with quantum systems of dimension three \cite{Klyachko:2008PRL}) are ``nonclassical''. 

As a second result, we show that for every form of state-independent (SI) contextuality \cite{Cabello2008,PhysRevLett.103.050401,Yu2012,PhysRevLett.109.250402}, the ratio between the dimensions of the classical systems and quantum systems required to accomplish the task can be made arbitrarily large by increasing the number of inputs. These communication complexity tasks are the so-called \textit{equality problems} that appear in many practical scenarios \cite{ccbook,TRbook,rao_yehudayoff_2020}. 
Finally, we present a semi-device-independent (SDI) protocol for quantum key distribution (QKD) \cite{Marcin2011} based on the quantum advantage in our communication complexity tasks, in which security
is proven by using the monogamy relation \cite{Ramanathan2012,Kurzy2014,Saha2017} of contextuality.

\textit{Contextuality witnesses.---}Given a set $\{e_i\}_{i=1}^n$ of events produced in a contextuality experiment, one can define an $n$-vertex graph $G$ in which each event is represented by a vertex and exclusive events correspond to adjacent vertices. $G$ is called the graph of exclusivity of $\{e_i\}_{i=1}^n$. 
In quantum mechanics, each event $e_i$ is represented by a projector $\Pi_i$. Mutually exclusive events are represented by mutually orthogonal projectors. A quantum realization of a set of events $\{e_i\}_{i=1}^n$ with graph of exclusivity $G$ is a set of projectors $\{\Pi_i\}_{i=1}^n$ that satisfies all the exclusivity relations in $G$ and all the constraints imposed by the definition of the events.

\begin{definition}[Contextuality witness]
A functional
\be
\label{wit}
W= \sum_{i=1}^n w_i P(e_i), 
\ee
where $w_i \geqslant 0$ and $P(e_i)$ is the probability of event $e_i$, is a quantum contextuality witness if there is a quantum realization $\{\Pi_i\}_{i=1}^n$ of $\{e_i\}_{i=0}^n$ and a quantum state $\rho$ such that 
\be \label{wc}
\sum_{i = 1}^n w_i\ \tr (\rho \Pi_i) > \alpha (G,\Vec{w}),
\ee where 
$\alpha(G,\Vec{w})$
is the independence number of the vertex-weighted graph $(G,\Vec{w})$, where $G$ is the graph of exclusivity of $\{e_i\}_{i=0}^n$ and $\Vec{w} = \{w_i\}_{i=1}^n$. That is, $\alpha(G,\Vec{w})$ is the largest value of $\sum_{i \in I} w_i$, where $I$ is the set of the subsets consisting of nonadjacent vertices of $G$ \cite{bondybook}. 
\end{definition}

The name `contextuality witness' follows from the fact that, given $W$, one can find a noncontextuality inequality \cite{PhysRevA.93.032102,PhysRevLett.127.070401} whose upper bound for noncontextual models is $\alpha (G,\Vec{w})$ and whose quantum value is the left-hand side of Eq. \eqref{wc} \cite{PhysRevA.93.032102,PhysRevLett.127.070401,Cabello2014}. 

We will focus on quantum realizations of contextuality witnesses constructed as follows. We first identify a vertex-weighted graph $(G,\Vec{w})$ 
for which we can identify $\{\Pi_i\}_{i=1}^n$ and $\rho$ such that Eq. \eqref{wc} holds. We will refer to $\{(G,\Vec{w}),\{\Pi_i\}_{i=1}^n,\rho\}$ as a quantum realization of a contextuality witness for $(G,\Vec{w})$. 
In some cases, there is no need to identify a state $\rho$.

%
\begin{definition}[State-independent contextuality witness]
The functional \eqref{wit} is a quantum state-independent contextuality witness for dimension $d$ if there is a quantum realization $\{\Pi_i\}_{i=1}^n$ of $\{e_i\}_{i=0}^n$ such that Eq. \eqref{wc} holds $\forall \rho \in \mathcal{O}(\mathbb{C}^d)$, where $\mathcal{O}(\mathbb{C}^d)$ denotes the set of quantum states in $\mathbb{C}^d$.
\end{definition}
%
If we have $\{(G',\Vec{w}),\{\Pi_i\}_{i=1}^{n'},\rho\}$ satisfying Eq. \eqref{wc} that includes projectors that are not of rank one, we can obtain $\{(G,\Vec{w}),\{|\psi_i\rangle\!\langle \psi_i|\}_{i=1}^{n},\rho\}$ satisfying Eq. \eqref{wc} by splitting each of the projectors that are not rank one into rank-one projectors. See Appendix \ref{app:lemma}.


\begin{figure}[h!]
	\includegraphics[width=0.47\textwidth]{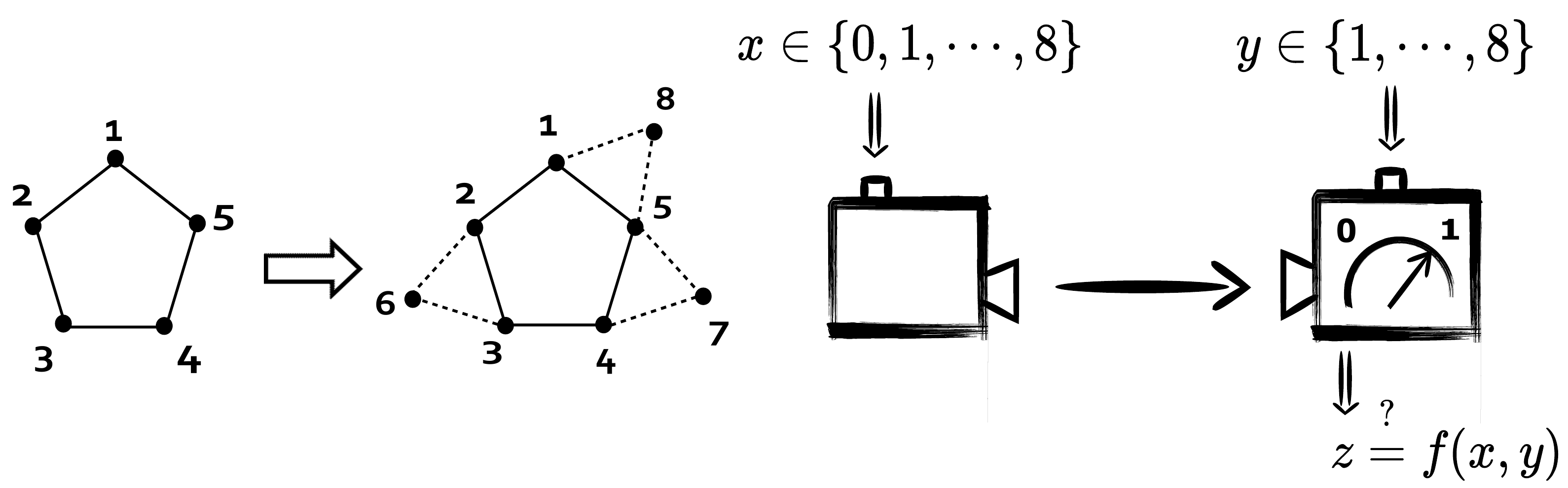}
	\caption{On the left, the construction of the extended graph from the 5-cycle graph. Each of the 8 vertices of the extended graph belongs to at least one clique of size $3$. On the right, scheme of the communication complexity task based on the extended graph.}
	\label{task}
\end{figure}


\textit{One-way communication complexity.---}Communication complexity \cite{ccbook} studies the amount of communication required for tasks involving inputs distributed among several parties. In one-way communication complexity \cite{ccbook,DEWOLF,RevModPhys.82.665,buhrman1998quantum}, there are two parties.
As shown in Fig.~\ref{task}, in each round, Alice, receives a random input $x \in X$. Depending upon $x$, Alice sends a message (classical or quantum) to Bob. In addition, Bob receives a random input $y \in Y$. Using $y$ and the message received from Alice, Bob outputs $z$ which is Bob's guess about a certain function $f(x,y)$. After many rounds, they produce the probability $p(z|x, y)$ of $z$, given inputs $x$ and $y$.
The figure of merit of the task is given by
\begin{equation}
S = \sum_{x,y} t(x,y) p(z=f(x,y)|x,y),
\label{fom}
\end{equation}
where $t(x,y)\geqslant 0$ and $\sum_{x,y} t(x,y) = 1$. 
We are interested in two aspects. Firstly, the maximum value of $S$ that can be achieved under the restriction that the dimension of the  (classical or quantum) system communicated from Alice to Bob is $d$. 
Secondly, the minimum dimensional (classical or quantum) system required to communicate in order to achieve a certain value of $S$. Sharing prior classical randomness between Alice and Bob is allowed.


\textit{Communication complexity advantage based on quantum contextuality witnesses.---} Consider $\{(G,\Vec{w}), \{|\psi_i\rangle\!\langle \psi_i|\}_{i=1}^n,\rho\}$ satisfying Eq. \eqref{wc} and such that $\rho \in \mathcal{O}(\mathbb{C}^d)$.
Since $w_i \geqslant 0$, without
loss of generality, we can take $\max_i w_i = 1$. The task is defined as follows. 
First, we consider an extended graph $\widetilde{G}$ by adding additional vertices to $G$ such that 
each vertex in $\widetilde{G}$ belongs to, at least, one clique of size $d$. A clique is a set of vertices in which every pair are adjacent. We thereupon assign additional vectors (or rank-one projectors) to those additional vertices, so each vector belongs to at least one basis within the new set of vectors; see Fig.~\ref{task}.
Alice receives $x \in \{0,1, \ldots, n+k\}$ and Bob receives 
$y \in \{1, \ldots, n+k\}$, where $k$ number of vertices is added. Bob outputs his guess for 
\be \label{eq:fxy}
 f(x,y) = \begin{cases}
0, \ \text{ if } \ y=x, \\
1, \ \text{ if } \ y\in N_x, \\ 
0, \ \text{ if } \ y \in \{1,\dots,n\} \text{ and } x=0, 
\end{cases}
\ee 
where $N_x$ is the set of the vertices that are adjacent to (i.e., neighbors of) $x$ in $\widetilde{G}$. In other words, Bob needs to distinguish the runs where
$y=x$ and $x=0$ from the runs where $y\in N_x$. Except for these, whenever $y \neq$ $x$ and $y\notin N_x$, or, $x=0$ and $y\in \{n+1,\ldots,n+k\}$, the runs do not contribute to the figure of merit of the communication task. 
That is, the task for Alice and Bob is to maximize
\begin{widetext}
\begin{equation}\label{fomk}
S^{(\widetilde{G},\vec{w},d)} = \frac{1}{N} \left[\sum_{x=1}^{n+k}p(z=0|x,y=x) + \sum_{x=1}^{n+k} \sum_{y\in N_x} p(z=1|x,y) + \sum_{y=1}^n w_y p(z=0|x=0,y)\right],
\end{equation}
\end{widetext}
where
\be \label{eqN}
N = n+k+\sum_{x=1}^{n+k} |N_x| + \sum_{i=1}^n w_i,
\ee 
thus $\sum_{x,y}t(x,y)=1$. Alice and Bob must accomplish this task with the restriction that the dimension of the (classical or quantum) system communicated between them is $d$. Therefore, the communication task is fully specified by the value of $d$, the extended graph $\widetilde{G}$, and the weights $\vec{w}$. The important point is that there is quantum advantage in this communication task whenever  $d$ is ``sufficiently small'' in the sense that $d \leqslant \chi(G)$, where $\chi(G)$ is the chromatic number of the graph $G$  \footnote{Chromatic number is the smallest number of colors needed to color the vertices of $G$ so that no two adjacent vertices share the same color \cite{bondybook}}.

\begin{result}
For any $\{(G,\Vec{w}), \{|\psi_i\rangle\!\langle \psi_i|\}_{i=1}^n,\rho \}$ with $\rho\in \mathcal{O}(\mathbbm{C}^d)$ such that Eq. \eqref{wc} holds and $d\leqslant \chi(G)$, there exists a quantum strategy
for the communication complexity task defined by Eq. \eqref{fomk} that provides an advantage over any strategy in which the system communicated between Alice and Bob is classical.
\end{result}
In a general quantum strategy, let $\rho_x \in \mathcal{O}(\mathbbm{C}^d)$ denote the quantum state sent by Alice to Bob upon receiving $x$, and let $\{M_{0|y},M_{1|y} = \openone - M_{0|y}\}_{y=1}^{n+k}$ denote the quantum measurement Bob performs on $\rho_x$ to obtain $z$, upon receiving input $y$. Suppose that $\{(G,\Vec{w}), \{|\psi_i\rangle\!\langle \psi_i|\}_{i=1}^n,\rho \}$ is a contextuality witness satisfying Eq. \eqref{wc}. 
Since, the projectors in $\{|\psi_i\rangle\!\langle \psi_i|\}_{i=1}^n$ are of rank one, one can always add $k$ rank-one projectors $\{|\psi_i\rangle\!\langle \psi_i|\}_{i=n+1}^{n+k}$ so that the extended set $\{|\psi_i\rangle\!\langle \psi_i|\}_{i=1}^{n+k}$ has the relations of orthogonality given by $\widetilde{G}$. Given $\{(G,\Vec{w}), \{|\psi_i\rangle\!\langle \psi_i|\}_{i=1}^n,\rho \}$, Alice and Bob choose an extended set and apply the following strategy:
\begin{eqnarray} \label{qcp}
& \rho_0 = \rho,\ 
\rho_x = |\psi_x\rangle\!\langle \psi_x|, \quad x=1, \ldots, n+k, \nonumber \\
& M_{0|y} = |\psi_y\rangle\!\langle \psi_y|, \quad y=1, \ldots, n+k .
\end{eqnarray}
This way, $p(z=0|x,y=x)=p(z=1|x,y\in N_x)=1$, so the value of $S^{(\widetilde{G},\vec{w},d)}$ in Eq. \eqref{fomk} is
\begin{equation}
 \frac{1}{N} \left[n+k + \sum_{x=1}^{n+k} |N_x| + \sum_{i=1}^n w_i \tr ( \rho |\psi_i\rangle\!\langle \psi_i|) \right],
\label{fom1}
\end{equation} 
while communicating a (quantum) system of dimension $d$ between Alice and Bob.
In contrast to that,

%
\begin{thm}\label{thm:Sc}
Whenever $d\leqslant \chi(G)$, for any strategy in which the system communicated between Alice and Bob is a classical system of dimension $d$, the value of $S^{(\widetilde{G},\vec{w},d)}$ is upper bounded by
\be \label{fomclassical} 
S^{(\widetilde{G},\vec{w},d)} \leqslant S^{(\widetilde{G},\vec{w},d)}_c = \frac{1}{N}\left[n+k+\sum_{x=1}^{n+k}|N_x| +\alpha(G,\Vec{w}) - \delta \right],
\ee 
where $\delta$ is the minimum number of ``improperly colored'' vertices of $\widetilde{G}$ when $d$ colors are used to color all the vertices. A vertex is improperly colored if it has at least one neighbor sharing the same color.
\end{thm}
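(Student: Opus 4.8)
\emph{Proof outline.}
Since $S^{(\widetilde{G},\vec w,d)}$ is affine in the strategy and shared randomness only mixes strategies convexly, it suffices to bound it over deterministic classical strategies. Such a strategy consists of Alice's encoding $v\colon\{0,1,\dots,n+k\}\to\{1,\dots,d\}$, whose restriction to $V(\widetilde{G})=\{1,\dots,n+k\}$ is a (not necessarily proper) $d$-colouring of $\widetilde{G}$, together with decoding sets $A_y\subseteq\{1,\dots,d\}$, $y=1,\dots,n+k$, with which Bob outputs $z=0$ iff the received symbol lies in $A_y$. Writing $X_c=\{x:v(x)=c\}$ for the colour classes, the first step is to regroup the three sums in Eq.~\eqref{fomk} by Bob's input $y$ and by colour $c$, and to carry out the (independent) optimization over each $A_y$; this should give
\begin{widetext}
\begin{equation}\label{eq:reorg}
N\,S^{(\widetilde{G},\vec w,d)}\;=\;\sum_{x=1}^{n+k}|N_x|\;+\;\sum_{y=1}^{n+k}\;\sum_{c=1}^{d}\left(\,\mathbbm{1}[y\in X_c]\;+\;w_y\,\mathbbm{1}[c=v(0)]\,\mathbbm{1}[y\le n]\;-\;|X_c\cap N_y|\,\right)^{+},
\end{equation}
\end{widetext}
with $(\cdot)^{+}=\max(\cdot,0)$.

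A short case analysis of the inner sum — for a fixed $y$ only the colours $c=v(y)$ and $c=v(0)$ can contribute — should identify this double sum with $p(v)+\sum_{y\in J}w_y$, where $p(v)$ is the number of \emph{properly} coloured vertices of $\widetilde{G}$ under $v$, and $J\subseteq\{1,\dots,n\}$ is the set consisting of those $y$ with $v(y)\neq v(0)$ having no $\widetilde{G}$-neighbour coloured $v(0)$, together with those $y$ with $v(y)=v(0)$ having at most one $\widetilde{G}$-neighbour coloured $v(0)$. Writing $b(v)=n+k-p(v)$ for the number of improperly coloured vertices, so that $\delta=\min_v b(v)$, Eq.~\eqref{fomclassical} is then equivalent to the single estimate
\begin{equation}\label{eq:crux}
\sum_{y\in J}w_y\;\le\;\alpha(G,\vec w)\;+\;\bigl(b(v)-\delta\bigr),
\end{equation}
to be established for every such colouring $v$.

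To prove \eqref{eq:crux} I would analyse the induced subgraph $G[J]$. Split $J=J_1\sqcup J_2$ according to whether $v(y)\neq v(0)$ or $v(y)=v(0)$. One checks that $G$ has no edge between $J_1$ and $J_2$; that $G[J_2]$ has maximum degree at most $1$, i.e.\ is a matching, each of whose edges is monochromatic of colour $v(0)$ and hence accounts for two improperly coloured vertices; and that — here the defining property of $\widetilde{G}$ and the hypothesis $d\le\chi(G)$ enter — every $y\in J_1$ lies in a clique of size $d$ in $\widetilde{G}$ all of whose vertices avoid the colour $v(0)$, so that clique uses at most $d-1$ colours and therefore contains improperly coloured vertices. (Were the colouring proper, every vertex's $d$-clique would use all $d$ colours, forcing $J_1=\emptyset$; thus $J_1$ is entirely a by-product of improper colouring, which is the intuition behind \eqref{eq:crux}.) Since $G[J]=G[J_1]\sqcup G[J_2]$, we have $\alpha(G[J],\vec w)=\alpha(G[J_1],\vec w)+\alpha(G[J_2],\vec w)\le\alpha(G,\vec w)$, so $\sum_{y\in J}w_y-\alpha(G[J],\vec w)$ is the weight of a minimum-weight vertex cover of $G[J]$, and \eqref{eq:crux} reduces to bounding that cover by $b(v)-\delta$, the number of improperly coloured vertices in excess of the unavoidable minimum.

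The main obstacle is exactly this last bookkeeping. For $G[J_2]$ it is a matching count, but for $G[J_1]$ the $d$-cliques attached to distinct vertices can overlap, so one cannot merely add ``two improper vertices per clique''; one must argue that a large $J_1$ — which forces the colour $v(0)$ out of an entire region of $\widetilde{G}$, hence effectively imposes a $(d-1)$-colouring on a subgraph that is not $(d-1)$-colourable — produces improperly coloured vertices strictly beyond the $\delta$ that every $d$-colouring already pays; this is where $d\le\chi(G)$ and the clique-covering structure of $\widetilde{G}$ are used essentially. The remaining ingredients — the reduction to deterministic strategies, the identity \eqref{eq:reorg}, and the identification of the double sum with $p(v)+\sum_{y\in J}w_y$ — are routine.
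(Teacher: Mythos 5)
Your reduction to deterministic strategies, the regrouped expression for $N\,S^{(\widetilde{G},\vec w,d)}$ with the $(\cdot)^{+}$ terms, and the identification of the double sum with $p(v)+\sum_{y\in J}w_y$ are all correct, and up to that point your route coincides in substance with the paper's (its Lemma 2 plus the per-input case analysis). But the proof stops exactly at the statement that carries all the content of the theorem: the inequality $\sum_{y\in J}w_y\leqslant\alpha(G,\vec w)+b(v)-\delta$ is asserted, reformulated as a vertex-cover bound on $G[J]$, and then explicitly left open ("the main obstacle is exactly this last bookkeeping"). The observations you list do not close it: the matching structure of $G[J_2]$ and the fact that each $y\in J_1$ sits in a $d$-clique of $\widetilde{G}$ avoiding the colour $v(0)$ only show that $J$-vertices force \emph{some} improperly coloured vertices to exist, not that the \emph{excess} $b(v)-\delta$ over the unavoidable minimum dominates the weight of $J$ beyond $\alpha(G,\vec w)$; since the $d$-cliques attached to different $J_1$-vertices overlap and may reuse the same improperly coloured vertices (which, moreover, may already be among the $\delta$ that every $d$-colouring pays), no counting argument is actually supplied. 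Without this step the classical bound $S_c^{(\widetilde{G},\vec w,d)}$ is not established, so the proposal has a genuine gap rather than being a complete alternative proof.

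For comparison, the paper closes this step with a different bookkeeping. It lower-bounds the subtracted ``loss'' terms input by input: every improperly coloured vertex loses at least $1$; every properly coloured $y\leqslant n$ whose colour differs from $m_0:=v(0)$ loses $w_y$, because its $d$-clique then contains a vertex coloured $m_0$ (this is where your set $J_1$ is implicitly excluded — a nonempty $J_1$ is folded into the count $q$ of improperly coloured vertices in excess of $\delta$); and finally the total weight of the vertices coloured $m_0$ can exceed $\alpha(G,\vec w)$ by at most $q\cdot\max_i w_i=q$, which is exactly cancelled by the loss $q$, so the optimum is attained at $q=0$, giving $\alpha(G,\vec w)-\delta$. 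If you want to complete your version, this is the kind of charging argument you need for your inequality: show that each vertex of $J$ beyond an independent set of weight $\alpha(G,\vec w)$ can be injectively charged (with weight at most $\max_i w_i=1$) to an improperly coloured vertex that is not needed to realize $\delta$, using $d\leqslant\chi(G)$ and the clique cover of $\widetilde{G}$; as it stands, that charge is missing.
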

For a proof, see Appendix \ref{app:thms}. 
 Because of Eq.~\eqref{wc} and the fact that $\delta$ is non-negative, the expression in Eq. \eqref{fom1} is strictly larger than $S_c^{(\widetilde{G},\vec{w},d)}$. 

Let us suppose that $d_{\min}$ is the minimum dimension in which the set of projectors $\{|\psi_i\rangle\!\langle \psi_i|\}$ and $\rho$ can be realized such that $\{(G,\Vec{w}), \{|\psi_i\rangle\!\langle \psi_i|\}_{i=1}^n,\rho \}$ is a contextuality witness.
For any SI contextuality witness, $\chi(G)>d_{\min}$ \cite{Cabello;2011arXiv,Ramanathan2014,Cabello2015}. Therefore, whenever $d=d_{\min}$, there will be at least two adjacent vertices sharing the same color when $d$ colors are used to color the graph, implying $\delta \geqslant 2$. Moreover, in this case, $\rho_0$ can be any quantum state in $\mathcal{O}(\mathbbm{C}^d)$.

Explicit examples of the quantum advantage for communication complexity tasks based on some quantum SI contextuality sets are presented in Appendix \ref{app:qkd}, together with a proof of their robustness against white noise.


\textit{Certifying contextuality witness from communication complexity task.---}The quantum communication strategy given by Eq.~\eqref{qcp} is based on a contextuality witness. However, a general quantum strategy with advantage consists of a set of states $\{\rho_x\}_{x=0}^{n+k}$ acting on $\mathbbm{C}^d$ and a set of measurement $\{M_{0|y}\}_{y=1}^{n+k}$ so that the value of $S^{(\widetilde{G},\vec{w},d)}$ is greater than $S^{(\widetilde{G},\vec{w},d)}_c$. In general, such a strategy may not be related to contextuality witnesses. Nevertheless, the following theorem allows us to identify whether or not an unknown quantum communication strategy admits a contextuality witness.
\begin{thm}\label{thm:ccw}
For the above introduced communication task defined by $S^{(\widetilde{G},\vec{w},d)}$, the following condition holds:
\be \label{2terms}
\forall x,y,\ p(0|x,y=x)=p(1|x,y\in N_x)=1,
\ee 
if and only if $\{\rho_x\}$ is a set of rank-one projectors that has $\widetilde{G}$ as graph of orthogonality and $\rho_x = M_{0|x}$.
\end{thm}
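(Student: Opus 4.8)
The plan is to prove the two implications separately; the ``if'' direction is a direct verification while the ``only if'' direction carries all the content. For ``if'': suppose $\rho_x=M_{0|x}=|\psi_x\rangle\!\langle\psi_x|$ for $x=1,\dots,n+k$, with $\{|\psi_x\rangle\}$ having $\widetilde G$ as its graph of orthogonality. Then $p(0|x,y=x)=\tr(\rho_x M_{0|x})=|\langle\psi_x|\psi_x\rangle|^2=1$, and for $y\in N_x$ the edge $xy$ of $\widetilde G$ forces $\langle\psi_x|\psi_y\rangle=0$, hence $p(0|x,y)=|\langle\psi_x|\psi_y\rangle|^2=0$ and $p(1|x,y)=1$; so Eq.~\eqref{2terms} holds.

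For the ``only if'' direction, assume Eq.~\eqref{2terms}. First I would convert the probability-one constraints into support inclusions using the elementary fact that, for positive semidefinite $A$ and $B$, $\tr(AB)=0$ implies $AB=0$. From $\tr\!\big(\rho_x(\openone-M_{0|x})\big)=0$ one gets $\mathrm{supp}(\rho_x)\subseteq\ker(\openone-M_{0|x})$, i.e.\ $\rho_x$ is supported on the eigenspace of $M_{0|x}$ with eigenvalue $1$ (recall $0\leqslant M_{0|x}\leqslant\openone$ since $M_{1|x}=\openone-M_{0|x}\geqslant 0$); and from $\tr(\rho_x M_{0|y})=0$ for all $y\in N_x$ one gets $\mathrm{supp}(\rho_x)\subseteq\ker M_{0|y}$.

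The core step is to exploit the defining property of $\widetilde G$ that every vertex lies in a clique of size $d$. Fix such a clique $C=\{x_1,\dots,x_d\}$ and choose unit vectors $v_i\in\mathrm{supp}(\rho_{x_i})$. The inclusions above give $M_{0|x_i}v_i=v_i$ and $M_{0|x_j}v_i=0$ for $i\neq j$ in $C$, since all pairs in a clique are adjacent in $\widetilde G$. Applying $M_{0|x_j}$ to a hypothetical relation $\sum_i c_i v_i=0$ then leaves only $c_j v_j=0$, so $\{v_1,\dots,v_d\}$ is linearly independent and hence a basis of $\mathbb{C}^d$. Therefore $\sum_{x_i\in C}M_{0|x_i}$ acts as the identity on this basis, so it equals $\openone$; taking the trace gives $\sum_i\tr M_{0|x_i}=d$, while each $\tr M_{0|x_i}\geqslant 1$ because $M_{0|x_i}\geqslant 0$ has the eigenvalue $1$. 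Hence $\tr M_{0|x_i}=1$ for every $i$, and a positive semidefinite operator of unit trace possessing the eigenvalue $1$ is a rank-one projector. Since every vertex of $\widetilde G$ is covered by some $d$-clique, every $M_{0|x}$ is a rank-one projector $|\phi_x\rangle\!\langle\phi_x|$.

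The conclusion is then immediate: $\mathrm{supp}(\rho_x)$ lies in the one-dimensional space $\mathrm{span}\{|\phi_x\rangle\}$, so $\rho_x=|\phi_x\rangle\!\langle\phi_x|=M_{0|x}$ is a rank-one projector; and for $y\in N_x$ the inclusion $\mathrm{supp}(\rho_x)\subseteq\ker M_{0|y}$ reads $\langle\phi_x|\phi_y\rangle=0$, so adjacent vertices of $\widetilde G$ receive mutually orthogonal vectors, i.e.\ $\{\rho_x\}$ realizes $\widetilde G$ as its graph of orthogonality. I expect the main obstacle to be exactly the passage from ``every vertex lies in a $d$-clique'' to ``$M_{0|x}$ is rank-one'': the right tool is the linear-independence-plus-trace-counting argument above, and one must use that the clique size is \emph{exactly} $d$ (so the $d$ support vectors are forced to span $\mathbb{C}^d$) together with the fact that a unit-trace positive semidefinite operator with an eigenvalue equal to one is a rank-one projector.
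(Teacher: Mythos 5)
Your proposal is correct and follows essentially the same route as the paper's proof: the probability-one conditions are converted into support/kernel inclusions ($\mathrm{supp}(\rho_x)$ inside the eigenvalue-one eigenspace of $M_{0|x}$ and inside $\ker M_{0|y}$ for $y\in N_x$), and the defining property that every vertex of $\widetilde G$ lies in a clique of size $d$ in dimension $d$ is then used to force rank-one projectors and $\rho_x=M_{0|x}$. The only (immaterial) difference is the order of the argument: you first pin down the measurements as rank-one projectors via the resolution $\sum_{x_i\in C}M_{0|x_i}=\openone$ and trace counting, whereas the paper first shows the $d$ mutually orthogonal supports $\rho_{x_i}$ must be one-dimensional and then deduces $M_{0|x}=\rho_x$ from normalization.
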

For a proof, see Appendix~\ref{app:thms}. Therefore, Theorem~\ref{thm:ccw} presents operational criteria to certify a set of rank-one projectors satisfying orthogonality relations according to a graph. Note that the probabilities in Eq.~\eqref{2terms} are the first two terms of $S^{(\widetilde{G},\vec{w},d)}$. Consider the particular case of the task \eqref{fomk} in which $d = \chi(G)$ and an unknown quantum strategy comprising $\{\rho_x\}$,$\{M_{0|y}\}$ that provides greater value than $S_c^{(\widetilde{G},\vec{w},d)}$. First, it follows from Eq.~\eqref{fomclassical} that, in this case, $\sum_{y=1}^n w_y \tr ( \rho_0 M_{0|y}) > \alpha(G,\Vec{w})$ since $\delta =0$. In addition to that, if the first two terms in $S^{(\widetilde{G},\vec{w},d)}$ attain their algebraic values, then Theorem~\ref{thm:ccw} implies $\{(G,\vec{w}), \{M_{0|y}\}_{y=1}^n, \rho_0 \}$ must be a contextuality witness.


\textit{Increasing advantage in communication complexity.---}Here, we will consider only those contextuality witnesses where $\chi(G)>d_{\min}$. In these cases, it suffices to consider a simplified version of the above-described communication complexity task by taking the first two terms of $S^{(\widetilde{G},\vec{w},d)}$. Therefore, the figure of merit will be
\begin{equation}
S^G = \frac{1}{N} \left[\sum_{x=1}^{n}p(z=0|x,y=x) + \sum_{x=1}^{n} \sum_{y\in N_x} p(z=1|x,y) \right],
\label{simS}
\end{equation}
where 
$N = n+\sum_{x=1}^{n} |N_x|$.
Here, the communication problem is solely based on the exclusivity graph $G$ \cite{Saha2019}, and we do not need to consider additional inputs apart from the set of vertices of $G$. More importantly,
this is an equality problem as Bob guesses whether his input $y$ is equal to $x$ or not \cite{ccbook}. 

Let $Q(G)$ (or $C(G)$) be the minimum dimension of quantum system (or classical system) that should be communicated to achieve $S^G=1$. We are now interested in quantum advantages in terms of $C(G)$ and $Q(G) $. A quantum advantage in communication complexity implies $C(G)>Q(G) $, or, equivalently, $\log_2{[C(G)]}>\log_2{[Q(G)]}$ conventionally expressed in terms of classical and quantum bits. 
\begin{thm}\label{thm:cc}
Given any witness $\{(G,\Vec{w}), \{|\psi_i\rangle\!\langle \psi_i|\}_{i=1}^n,\rho \}$ where $|\psi_i\rangle \in \mathbbm{C}^{d_{\min}}$,
\be 
Q(G) \leqslant d_{\min}\ , \quad C(G) = \chi(G) .
\ee 
Moreover, $Q(G)$ is the minimum dimension $d$ such that there exists a set of projectors $\{\Pi_i\}$ acting on $\mathbbm{C}^d$ satisfying the orthogonality relations given by $G$. 
\end{thm}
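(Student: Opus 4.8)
The plan is to establish the three assertions—$Q(G)\le d_{\min}$, $C(G)=\chi(G)$, and the characterization of $Q(G)$—separately, each time pairing an explicit protocol with a matching lower bound on the dimension that must be communicated. Two remarks streamline everything. First, $S^{G}=1$ holds if and only if $p(z=0|x,y=x)=1$ for every vertex $x$ and $p(z=1|x,y)=1$ for every $x$ and every $y\in N_{x}$. Second, since $S^{G}$ is affine in the observed probabilities, if a protocol with shared randomness attains $S^{G}=1$ then so does one obtained by fixing the shared random string; hence we may assume no shared randomness throughout, and in the classical case a deterministic protocol.

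\emph{The quantum cost.} Let $d^{\ast}$ be the least $d$ for which $G$ admits a set of projectors on $\mathbbm{C}^{d}$ satisfying its orthogonality relations; these may be taken rank one, since replacing a projector by a rank-one sub-projector preserves orthogonality. Given any such family $\{\ket{\phi_{i}}\!\bra{\phi_{i}}\}$ in $\mathbbm{C}^{d^{\ast}}$—of which $\{\ket{\psi_{i}}\!\bra{\psi_{i}}\}\subset\mathbbm{C}^{d_{\min}}$ is an instance, so $d^{\ast}\le d_{\min}$—Alice sends $\rho_{x}=\ket{\phi_{x}}\!\bra{\phi_{x}}$ and Bob measures $\{M_{0|y}=\ket{\phi_{y}}\!\bra{\phi_{y}},\,\openone-\ket{\phi_{y}}\!\bra{\phi_{y}}\}$, giving $\tr(\rho_{x}M_{0|x})=1$ and $\tr(\rho_{x}M_{1|y})=1-|\langle\phi_{x}|\phi_{y}\rangle|^{2}=1$ for $y\in N_{x}$, hence $S^{G}=1$; thus $Q(G)\le d^{\ast}\le d_{\min}$. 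Conversely, let $\{\rho_{x}\},\{M_{0|y}\}$ on $\mathbbm{C}^{d}$ attain $S^{G}=1$ and choose a unit vector $\ket{\phi_{x}}$ in the support of $\rho_{x}$ for each $x$. From $\tr(\rho_{x}M_{0|x})=1$ with $0\le M_{0|x}\le\openone$ we get $M_{0|x}\ket{\phi_{x}}=\ket{\phi_{x}}$, and from $\tr(\rho_{x}M_{0|y})=0$ for $y\in N_{x}$ we get $M_{0|y}\ket{\phi_{x}}=0$; hence for every edge $\{x,y\}$, $\langle\phi_{x}|\phi_{y}\rangle=\langle\phi_{x}|M_{0|y}|\phi_{y}\rangle=(M_{0|y}\ket{\phi_{x}})^{\dagger}\ket{\phi_{y}}=0$. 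So $\{\ket{\phi_{x}}\}$ is a set of unit vectors in $\mathbbm{C}^{d}$ orthogonal along the edges of $G$, whence $d\ge d^{\ast}$ and $Q(G)=d^{\ast}$.

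\emph{The classical cost.} For $C(G)\le\chi(G)$, fix a proper colouring $c\colon V(G)\to\{1,\dots,\chi(G)\}$; Alice sends $c(x)$ and Bob outputs $z=0$ exactly when $c(y)$ equals the received colour, which is correct both when $y=x$ (equal colours) and when $y\in N_{x}$ (distinct colours), so $S^{G}=1$. For the converse, in any deterministic classical protocol attaining $S^{G}=1$ let $a\colon V(G)\to\{1,\dots,C\}$ be Alice's encoding and $b$ Bob's decoding; if an edge $\{x,y\}$ had $a(x)=a(y)=m$, then $p(z=0|y,y)=1$ would force $b(m,y)=0$ while $p(z=1|x,y)=1$ would force $b(m,y)=1$, which is impossible. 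Hence $a$ is a proper $C$-colouring, $C\ge\chi(G)$, and $C(G)=\chi(G)$.

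The one nontrivial point is obtaining, in the lower bound for $Q(G)$, the vector identities $M_{0|x}\ket{\phi_{x}}=\ket{\phi_{x}}$ and $M_{0|y}\ket{\phi_{x}}=0$ from the scalar conditions $\tr(\rho_{x}M_{0|x})=1$ and $\tr(\rho_{x}M_{0|y})=0$. This rests on the elementary fact that a positive operator $E\le\openone$ with $\tr(\rho E)=\tr(\rho)$ acts as the identity on the support of $\rho$, while one with $\tr(\rho E)=0$ annihilates it—a subtlety only because Bob's operators are arbitrary effects rather than projectors (this is also where the reasoning parallels the proof of Theorem~\ref{thm:ccw}). The remaining computations are routine.
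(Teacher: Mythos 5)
Your proof is correct and follows essentially the same route as the paper: perfect quantum strategies are shown to be equivalent to sets of projectors realizing the orthogonality relations of $G$ (giving $Q(G)=d^{*}\leqslant d_{\min}$), and perfect classical strategies are shown to be equivalent to proper colorings (giving $C(G)=\chi(G)$). The only differences are presentational—you handle shared randomness by a direct convexity argument and work with unit vectors in the supports of the $\rho_x$ instead of the paper's reduced operators $M_{0|y}^{\rho_x}$ and its general classical-value lemma—but the underlying argument is the same.
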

A proof is provided in Appendix \ref{app:thms}. We can readily check that the quantum strategy, $\rho_x = M_{0|x}= |\psi_i\rangle\!\langle \psi_i|$, yields $S^G=1$. Thus, we have an advantage  whenever $\chi(G)>d_{\min}$.  In order to observe an increasing advantage, we need to consider products of graphs.
\begin{definition}[Inclusive graph product or co-normal product or disjunctive product or OR product $G \times H$]
    The vertex set of the \textit{inclusive graph product} of two graphs $G,H$ is $V(G) \times V(H)$. The edges of $G\times H$ are defined as $(i,j) \sim (i',j')$  iff $i\sim j$ or $i' \sim j'$. We denote by $G^m$ the $m$-times product of the same graph $G$ \cite{Feige1997,feige1995randomized}. 
\end{definition}
%
%
\begin{thm}\label{thm:qm/cm}
Given a graph $G$ with $n$ vertices, the ratio between classical and quantum communication complexities of $S^G$ based on $G^m$, that is, $C(G^m)/Q(G^m)$ increases polynomially with $m$,
%
\be \label{qm/cm2}
\quad \frac{C(G^m)}{Q(G^m)} \geqslant \left(\frac{\chi_f(G)}{d_{\min} } \right)^m,  \text{ for } m \in \mathbbm{N},
\ee
where $\chi_f(G)$ is the fractional chromatic number of $G$ \cite{Ramanathan2014}. For any graph, $\chi_f(G) \leqslant \chi(G)$.
\end{thm}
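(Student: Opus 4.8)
The plan is to prove the two factors of the ratio separately by controlling how $Q$ and $C$ behave under the inclusive graph product $G^m$, using Theorem~\ref{thm:cc} to translate each into a graph invariant. For the numerator, I would invoke Theorem~\ref{thm:cc} to identify $C(G^m) = \chi(G^m)$. The key combinatorial input is the known multiplicativity bound for the chromatic number under the co-normal (disjunctive) product: $\chi(G \times H) \leqslant \chi(G)\chi(H)$, but what we actually need is a lower bound, which comes from the fractional chromatic number — indeed $\chi_f$ is multiplicative under the co-normal product, $\chi_f(G \times H) = \chi_f(G)\chi_f(H)$, and $\chi(G^m) \geqslant \chi_f(G^m) = \chi_f(G)^m$. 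This gives $C(G^m) \geqslant \chi_f(G)^m$.

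For the denominator, again by Theorem~\ref{thm:cc}, $Q(G^m)$ equals the minimum dimension $d$ admitting a set of projectors on $\mathbbm{C}^d$ realizing the orthogonality relations of $G^m$. The step here is to show this is at most $(d_{\min})^m$: if $\{\Pi_i\}_{i=1}^n$ on $\mathbbm{C}^{d_{\min}}$ realizes the orthogonality relations of $G$, then the tensor products $\{\Pi_i \otimes \Pi_j\}$ on $\mathbbm{C}^{d_{\min}} \otimes \mathbbm{C}^{d_{\min}} = \mathbbm{C}^{(d_{\min})^2}$ realize the orthogonality relations of $G \times G$: indeed $\Pi_i \otimes \Pi_j$ and $\Pi_{i'} \otimes \Pi_{j'}$ are orthogonal precisely when $\Pi_i \perp \Pi_{i'}$ or $\Pi_j \perp \Pi_{j'}$, i.e. when $i \sim i'$ or $j \sim j'$, which is the edge relation of the co-normal product. (If the $\Pi_i$ are rank-one, so are the tensor products, matching the hypothesis $|\psi_i\rangle \in \mathbbm{C}^{d_{\min}}$.) Iterating $m$ times yields a realization of $G^m$ in dimension $(d_{\min})^m$, hence $Q(G^m) \leqslant (d_{\min})^m$. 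Combining the two bounds gives $C(G^m)/Q(G^m) \geqslant \chi_f(G)^m / (d_{\min})^m = (\chi_f(G)/d_{\min})^m$, and since $G$ is chosen from a state-independent contextuality set we have $\chi_f(G) \geqslant \chi(G) > d_{\min}$ wait — actually we only need $\chi_f(G) > d_{\min}$ for the ratio to grow, but the statement as written only claims the inequality \eqref{qm/cm2}, so this last monotonicity remark is a corollary rather than part of the proof. The final inequality $\chi_f(G) \leqslant \chi(G)$ is standard.

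The main obstacle I anticipate is establishing the exact multiplicativity $\chi_f(G \times H) = \chi_f(G)\chi_f(H)$ for the co-normal product with the right direction of inequality; the upper bound $\chi_f(G^m) \leqslant \chi_f(G)^m$ follows easily by taking products of fractional colorings, but the matching lower bound $\chi_f(G^m) \geqslant \chi_f(G)^m$ is the substantive part. This is a classical result (it is closely tied to the fact that $\chi_f$ equals a fractional relaxation whose LP dual is governed by the independence number, which is sub-multiplicative under this product), and I would cite it rather than reprove it, pointing to the same references used for $\chi_f$ elsewhere in the paper \cite{Ramanathan2014}. A secondary subtlety is making sure the tensor-product construction for $Q$ respects the \emph{exact} orthogonality graph of $G^m$ and does not accidentally introduce extra orthogonalities that would correspond to a proper supergraph — but since $\Pi_i \otimes \Pi_j \perp \Pi_{i'} \otimes \Pi_{j'}$ holds \emph{if and only if} $\Pi_i \perp \Pi_{i'}$ or $\Pi_j \perp \Pi_{j'}$ when the factors are rank-one (for higher rank one must be slightly more careful, using that partial inner products vanish), the orthogonality pattern matches $G^m$ precisely, so no spurious edges arise.
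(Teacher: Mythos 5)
Your proposal is correct and follows essentially the same route as the paper: the tensor-product construction of projectors gives $Q(G^m)\leqslant (d_{\min})^m$, and $C(G^m)=\chi(G^m)\geqslant\chi_f(G^m)=\chi_f(G)^m$ via the multiplicativity of the fractional chromatic number under the co-normal product (the paper cites Lemma 2.8 of \cite{feige1995randomized} for exactly this). Your worry about spurious extra orthogonalities is unnecessary anyway, since achieving $S^{G^m}=1$ only requires adjacent vertices to be mapped to orthogonal projectors, so additional orthogonalities are harmless.
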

For a proof, see Appendix \ref{app:thms}. Since $\chi_f(G)/d_{\min} > 1$ for any quantum SI contextuality set in dimension $d_{\min}$ \cite{Ramanathan2014,Cabello2015}, the right-hand-side of \eqref{qm/cm2} can be arbitrarily large as $m$ increases \footnote{Note that $d_{\min}$ is lower bounded by the size of the maximum clique, i.e., the clique of largest size, of $G$.}. 
It follows from \eqref{qm/cm2} that the difference between the classical and quantum complexities for the equality task based on $G^m$ is lower bounded by 
$m \cdot \log_2 \left(\chi_f(G)/d_{\min}\right)$ bits,
which increases with $m$.
In Table~\ref{tab:my_label}, we present some explicit examples of the quantum advantage.


\begin{table}[h!]
    \centering
    \begin{tabular}{|c|c|c|c|c|}
    \hline 
      SI witness & $d_{\min}$ & $\chi_f(G)$ &  $C(G^m)/Q(G^m)$ from Eq.~\eqref{qm/cm2} \\ 
       with $n$ & & & so that $d_{\min}^m \sim 200$ qubits \\ 
         \hline 
         \hline 
         YO-13 \cite{Yu2012} & 3 &  35/11 & $\geqslant 6 \times 10^{13}$ \\
       Peres-33  \cite{Peres_1991} & 3 & 13/4 & $\geqslant 4 \times 10^{13}$ \\
       CEG-18 \cite{Cabello1996} & 4 & 9/2 & $\geqslant 3.4 \times 10^{7}$ \\
       Pauli-240 \cite{ZParXiv2022} & 8 & 15 & $\geqslant 1.9 \times 10^{18}$ \\
       Pauli-4320 \cite{ZParXiv2022} & 16 & 60 & $\geqslant 5 \times 10^{28}$ \\
       \hline 
    \end{tabular}
    \caption{In order to compare the quantum advantages originated from various SI contextuality witnesses, we have taken the value of $m$ for each set such that 200 qubits is sufficient to accomplish the respective equality problem. With respect to that, the lower bounds on the classical and quantum ratios have been obtained for various SI contextuality witnesses. 
}
    \label{tab:my_label}
\end{table}
Before proceeding to the next section, we point out an example of SI witness and the respective equality problem where the separation between the classical and quantum communication complexities grows exponentially with the dimension. 
Consider the set of vectors in $\mathbbm{C}^d$
of the form 
$
(1/\sqrt{d})\left[1,(-1)^{x_1}, \ldots, (-1)^{x_{d-1}} \right]^T,
$
where $x_i\in \{0,1\}$
such that in every vector the number of $x_i$ taking value $1$ is even. Note that there are $2^{d-2}$ such vectors in $\mathbbm{C}^d$, and let us denote this set by $\{|\phi_i\ra \}_{i=1}^{2^{d-2}}$. The graph, say $G_{N_d}$, representing the orthogonality relations for this set of vectors was introduced by Newman \cite{newman2004independent} and has been recently studied in the context of application of contextuality \cite{ZParXiv2022}. It turns out for any $d\geqslant 1128$ and divisible by 4,  $\{(G_{N_d},\vec{w}),\{|\phi_i\ra\!\la\phi_i|\}\}$ is SI contextuality witness where $w_i=1$ for all $i$ (see Appendix \ref{app:ls} 
for the proof). Remarkably, for the equality problem defined by \eqref{simS} with respect to $G_{N_d}$, we have
\be \label{ccrNG}
\frac{C(G_{N_d})}{Q(G_{N_d})} \geqslant \frac{1}{d}\left(\frac{2}{1.99}\right)^{d} .
\ee 
Thus, the gap between classical and quantum complexities is at least
$0.007 d- \log_2 d$ bits. The detailed proof of this fact is provided in Appendix \ref{app:ls}, which follows from the results by Frankl-R\"{o}dl \cite{FR}. 


\textit{Semi-device-independent quantum key distribution.---}Here, we propose that a QKD protocol based on quantum advantage in the communication complexity task introduced by $S^{(\widetilde{G},\vec{w},d)}$ in Eq.~\eqref{fomk} where $d$ is taken to be $d_{\min}$. Unlike fully device-dependent protocols \cite{BB84,Singh2017}, our protocol is semi-device-independent \cite{Marcin2011} involving two black boxes - Alice's preparation device and Bob's measurement device. We only assume that ($i$) the dimension of the degrees-of-freedom (of the physical system), in which the information is encoded, is bounded by $d_{\min}$, and ($ii$) the devices may share classical randomness but that is uncorrelated with the choices of the inputs $x,y$. 
The QKD protocol is as follows. After completing a large number of runs, Alice randomly chooses some runs and publicly announces her input $x$ so that Bob can verify that the obtained value of the figure of merit is greater than $S_c$. Thereby, Bob is ensured that the probabilities produced by his device cannot be simulated by classical systems under the aforementioned assumptions. Bob publicly announces his input $y$ for the remaining runs. Subsequently, Alice notes down $f(x,y)$ according to Eq. \eqref{eq:fxy} as the shared key. Whenever $y \notin \{x,N_x\}$, or, $y\in \{n+1,\dots,n+k\}$ and $x=0$, Alice publicly announces that the transmission is unsuccessful. \\
It is not difficult to show that such QKD protocol is secure against restricted eavesdroppers whenever the contextuality witness satisfies \textit{monogamy relations} that are proposed in \cite{Ramanathan2012}. The monogamy relation between two witnesses of contextuality realized on two separate degrees-of-freedom of any quantum state $\rho$ implies
\be \label{eq:mr}
\sum_{i = 1}^n w_i \tr \left(\rho (\Pi_i\otimes \I)\right) + \sum_{i = 1}^n w_i \tr \left(\rho (\I \otimes \overline{\Pi}_i) \right) \leqslant 2\alpha(G,\Vec{w}) ,
\ee 
for any $\Vec{w}$, where $\{\Pi_i\}$, $\{\overline{\Pi}_i\}$ realize the respective exclusivity graph $G$. Such relation holds for a large class of contextuality witnesses, including the well-known odd-cycle witnesses \cite{Ramanathan2012}. The QKD protocol is secure if the mutual information of Alice-Bob  is greater than the mutual information of Alice-Eve \cite{Marcin2011}, i.e., $I(A:B)>I(A:E)$, which for individual attacks and binary output implies 
$S_B > S_E $,
taking $S_B(S_E)$ be the value of $S^{(\widetilde{G},\vec{w},d)}$ obtained by Bob (Eve). Since Eve also knows the input $y$, she and Bob are in the same state to guess $f(x,y)$. Because of Theorem \ref{thm:ccw}, when Bob observes that the first two terms in $S^{(\widetilde{G},\vec{w},d)}$ attain their maximal values,
then $M_{0|y}$ are rank-one projectors realizing $\widetilde{G}$. We assume that $M_{0|y}$ for Eve also realizes $\widetilde{G}$. Now, even if Eve shares arbitrary quantum correlation with the preparation device of Alice, due to monogamy relation \eqref{eq:mr}, the following holds true:
\be 
\sum_{y=1}^n w_y p_B(0|x=0,y) + \sum_{y=1}^n w_y p_E(0|x=0,y) \leqslant 2 \alpha(G,\Vec{w}) .
\ee 
Taking the best possible scenario for Eve in which she also observes \eqref{2terms}, the above relation implies
\be 
S_B + S_E \leqslant 2 S^{(\widetilde{G},\vec{w},d)}_c .
\label{mono_relation}
\ee 
Therefore, whenever Alice-Bob obtains quantum advantage, that is, $S_B>S^{(\widetilde{G},\vec{w},d)}_c$, the protocol is secure against such eavesdropping. Subsequently, the key rate can be obtained by $r = I(A:B)-I(A:E)$ (see
Table~\eqref{table} in Appendix~\ref{app:qkd}).

In addition to the QKD protocol, these communication tasks can also be used to generate quantum randomness in the prepare-and-measure scenario \cite{randomness,randomness1,randomness2}. We have discussed this in Appendix \ref{app:rc}.


\textit{Conclusions.---}This work shows that all forms of quantum contextuality with sufficiently small dimension provide quantum advantage in distributed computation and in various communication protocols without requiring entanglement. In distributed computation, equality problems are essential for implementing large-scale circuits and data verification \cite{ccbook,TRbook,rao_yehudayoff_2020} (see Appendix \ref{app:ep}). We show the existence of a variant of the equality problem pertaining to every vertex-weighted graph with certain properties providing an advantage over classical communication. 

Considering equality problems defined by the graphs of a large class of contextuality witnesses including all quantum state-independent contextuality witnesses, we show that the communication complexity required to execute such problems in classical theory is larger than that in quantum theory. Moreover, 
the complexity advantage increases with an increase in the number of inputs, identifying a class of equality problems that can be solved only in quantum communication. 

As further applications of quantum contextuality driven communication tasks, we show how such
tasks can be used for semi-device-independent QKD, as
well as for the purpose of randomness generation. As interesting open problems for further work, we point out the possibility of extending the security proof of the QKD protocol to arbitrary individual eavesdropping strategies and finding optimal communication complexity advantages.
It would also be interesting to extend the link between quantum contextuality and quantum advantage in communication complexity tasks involving more than two parties, like, quantum fingerprinting \cite{quantumFP}. \\

\textit{Acknowledgements.---} D. S. acknowledges National Post-Doctoral Fellowship (PDF/2020/001682) for support. A. C. is supported by Project Qdisc (Project No.\ US-15097, Universidad de Sevilla), with FEDER funds, QuantERA grant SECRET, by MINECO (Project No.\ PCI2019-111885-2), and MICINN (Project No.\ PID2020-113738GB-I00).
A. S. M. acknowledges support from Project No. DST/ICPS/QuEST/2018/98 of the Department of Science \& Technology, Government of India.


\bibliography{ref} 


\onecolumngrid

\appendix 


\section{Contextuality witness of rank-one projectors}\label{app:lemma}


\begin{lemma}
For any $\{(G',\Vec{w'}),\{\Pi_i\}_{i=1}^{n'},\rho\}$ satisfying Eq.~(2) in the main text,
where the rank of $\Pi_i$ is $r_i$ $\geqslant 1$, there is $\{(G,\Vec{w}),\{|\psi_{i,k}\rangle\!\langle \psi_{i,k}|\},\rho\}$ satisfying (2)
such that 
\be \label{splitPi}
\sum_{k=1}^{r_i} |\psi_{i,k}\rangle\!\langle \psi_{i,k}| = \Pi_i,
\ee
where $i = 1, \ldots, n'$ and, for each $i$, $k = 1, \ldots,$ $r_i$ and $(G',\Vec{w'})$ is a vertex-weighted subgraph of $(G,\Vec{w})$.
\end{lemma}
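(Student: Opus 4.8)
The plan is to show that splitting each higher-rank projector into rank-one pieces preserves both sides of the contextuality inequality \eqref{wc} while only refining the exclusivity graph. First I would set up the decomposition: for each $i$, since $\Pi_i$ is a projector of rank $r_i$, pick an orthonormal basis $\{|\psi_{i,k}\rangle\}_{k=1}^{r_i}$ of its range, so that $\Pi_i = \sum_{k=1}^{r_i} |\psi_{i,k}\rangle\!\langle\psi_{i,k}|$ as in \eqref{splitPi}. Assign to the new event $e_{i,k}$ the weight $w_{i,k} := w'_i$ for every $k$, and let $G$ be the graph of exclusivity on the vertex set $\{(i,k)\}$ with $(i,k)\sim(j,l)$ whenever $\langle\psi_{i,k}|\psi_{j,l}\rangle = 0$. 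By construction, within a fixed $i$ the vectors $|\psi_{i,k}\rangle$ are mutually orthogonal, so each ``block'' $\{(i,k)\}_k$ is a clique; and whenever $i\sim j$ in $G'$ we have $\Pi_i\Pi_j=0$, which forces $\langle\psi_{i,k}|\psi_{j,l}\rangle=0$ for all $k,l$, so all edges of $G'$ lift to complete bipartite connections in $G$. Identifying each vertex $i$ of $G'$ with, say, $(i,1)$ exhibits $(G',\vec{w'})$ as a vertex-weighted subgraph of $(G,\vec{w})$.

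Next I would verify the quantum value is unchanged. Since $\tr(\rho\,|\psi_{i,k}\rangle\!\langle\psi_{i,k}|)$ sums over $k$ to $\tr(\rho\Pi_i)$ and $w_{i,k}=w'_i$, we get $\sum_{i,k} w_{i,k}\tr(\rho\,|\psi_{i,k}\rangle\!\langle\psi_{i,k}|) = \sum_i w'_i \tr(\rho\Pi_i)$, which by hypothesis exceeds $\alpha(G',\vec{w'})$. So it remains to check the other direction of the needed inequality: that the classical (independence-number) bound does not go up, i.e. $\alpha(G,\vec{w}) \leqslant \alpha(G',\vec{w'})$. Given any independent set $I$ in $G$, no two of its vertices lie in the same block (blocks are cliques), so $I$ contains at most one vertex $(i,k_i)$ per index $i$; let $J=\{i : (i,k_i)\in I\}$. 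If $i,j\in J$ with $i\sim j$ in $G'$, then $(i,k_i)\sim(j,k_j)$ in $G$, contradicting independence of $I$; hence $J$ is independent in $G'$, and $\sum_{(i,k)\in I} w_{i,k} = \sum_{i\in J} w'_i \leqslant \alpha(G',\vec{w'})$. Taking the max over $I$ gives $\alpha(G,\vec{w}) \leqslant \alpha(G',\vec{w'})$. (The reverse inequality also holds via the subgraph embedding, but only $\leqslant$ is needed here.) Combining, $\sum_{i,k} w_{i,k}\tr(\rho\,|\psi_{i,k}\rangle\!\langle\psi_{i,k}|) > \alpha(G',\vec{w'}) \geqslant \alpha(G,\vec{w})$, so Eq.~\eqref{wc} holds for the refined data.

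The one subtlety worth spelling out — and the main thing to get right rather than a genuine obstacle — is that the new set $\{|\psi_{i,k}\rangle\!\langle\psi_{i,k}|\}$ really is a legitimate quantum realization of the events $\{e_{i,k}\}$ with graph of exclusivity $G$: the orthogonality relations hold by the defining choice of edges in $G$, and any additional structural constraints on the original events (for instance, that certain subsets form complete measurements summing to identity) are inherited because a resolution $\sum_{i\in C}\Pi_i = \I$ becomes $\sum_{i\in C}\sum_k |\psi_{i,k}\rangle\!\langle\psi_{i,k}| = \I$ after splitting. I would also remark that if one prefers $\Vec{w}$ to have entries indexed by a single running index rather than pairs $(i,k)$, one simply relabels; the weight vector $\Vec{w}$ is then the concatenation of $r_i$ copies of $w'_i$. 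Finally, note the construction is purely combinatorial-plus-linear-algebra and involves no optimization, so there is no hard estimate to push through; the proof is complete once the independence-number comparison above is recorded.
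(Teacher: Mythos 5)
Your proof is correct and follows essentially the same route as the paper: split each $\Pi_i$ into rank-one projectors via an orthonormal basis of its range, give each child vertex the parent's weight, and check that the quantum value is preserved while the weighted independence number does not increase. The only real difference is that you take $G$ to be the full orthogonality graph of the new vectors (so you only prove $\alpha(G,\vec{w})\leqslant\alpha(G',\vec{w'})$, which indeed suffices and which you argue more explicitly than the paper does), whereas the paper defines the edges of $G$ combinatorially (block cliques plus edges lifted from $G'$), for which $\alpha$ is exactly preserved; note that under your definition the parenthetical claim that the reverse inequality follows from the subgraph embedding is not justified, since accidental orthogonalities between representative vectors can add edges among them, but this remark is not needed for the argument.
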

\begin{proof}
If the rank of $\Pi_i$ is $r_i (\geqslant 1)$, then we split the vertex $i$ into $r_i$ vertices, say $(i,k)$, where $k = 1, \ldots, r_i$. The edges and weights of the new graph $G$ are defined as follows:
\begin{eqnarray}
& \forall k, k', i, \quad  (i,k) \sim (i, k') \ , \text{ and } \nonumber \\
& \forall i, j, k, k', \quad  (i,k) \sim (j,k'), \ \ \text{\it iff}  \ \ i \sim j \ , \nonumber \\
& \forall i,k, \quad w_{(i,k)} = w'_i.  
\label{exclusiveR}
\end{eqnarray}
Eq.~(\ref{exclusiveR}) implies that $\alpha(G,\Vec{w}) = \alpha(G',\Vec{w'}) $. Now, if we split each projector according to \eqref{splitPi}, then
the set of rank-one projectors $ \{|\psi_{i,k}\rangle\!\langle \psi_{i,k}|\}$ has $G$ as the graph of orthogonality. Furthermore,
\begin{equation}
	\sum_{i=1}^n \sum_{k=1}^{r_i} w_{(i,k)}\ \tr (\rho |\psi_{i,k}\rangle\!\langle \psi_{i,k}|) = \sum_{i=1}^n w'_i\ \tr(\rho \Pi_i),
\end{equation} 
which is greater than $ \alpha(G,w)$ by definition.
\end{proof} 


\section{Communication complexity advantage based on quantum contextuality}\label{app:thms}


We first state a general feature of communication tasks in the following \textit{lemma}, which will be used later in the proof of the theorems.

\begin{lemma}
The maximum value of $S$ in Eq. (3),
when the system communicated between Alice and Bob is a classical system of dimension $d$, is given by
\be
S_c = \underset{\{p_e(m|x)\}}{\max} \left[1 - \sum_{y,m} \min \left\{ \sum_{x|f(x,y)=0} t(x,y) p_e(m|x), \sum_{x|f(x,y)=1} t(x,y) p_e(m|x) \right\} \right],
\label{reducedfom}
\ee 
where $m= 1,\dots,d$, and, $\forall x,m, \ p_e(m|x) \in \{0,1\}$, $\sum_m p_e(m|x) = 1$.
\end{lemma}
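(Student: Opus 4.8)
The plan is to reduce the optimization over all classical strategies to an optimization over deterministic encoding maps, and then show that for each fixed message the optimal decoding by Bob is a simple threshold rule, which yields the ``$\min$'' expression. First I would note that a general classical strategy of dimension $d$ consists of: shared randomness $\lambda$ with some distribution, an encoding map $x \mapsto m$ by Alice that may depend on $\lambda$ (hence a conditional distribution $p_e(m|x,\lambda)$), and a decoding map $(y,m) \mapsto z$ by Bob that may depend on $\lambda$. Since $S$ in Eq.~(3) is linear in the strategy, the maximum over mixed strategies equals the maximum over deterministic ones; so without loss of generality $\lambda$ is trivial, $p_e(m|x) \in \{0,1\}$ with $\sum_m p_e(m|x) = 1$, and Bob's output $z = z(y,m)$ is deterministic. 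This is the standard convexity/extreme-point argument for one-way communication complexity with prior randomness, and it justifies the form of the maximization in \eqref{reducedfom}.

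Next, I would fix an encoding $\{p_e(m|x)\}$ and compute the best Bob can do. Since the outputs are binary, $z \in \{0,1\}$, and Bob sees only $(y,m)$, his decoding is a function $z(y,m) \in \{0,1\}$. Writing the figure of merit with $p(z=f(x,y)|x,y) = \sum_m p_e(m|x)\,[\,z(y,m)=f(x,y)\,]$, we get
\begin{equation}
S = \sum_{y,m} \left( \sum_{x : f(x,y)=z(y,m)} t(x,y)\, p_e(m|x) \right).
\end{equation}
For each pair $(y,m)$ Bob independently picks $z(y,m)\in\{0,1\}$ to maximize the inner sum, so the optimal contribution of $(y,m)$ is $\max\{ A_{y,m}, B_{y,m}\}$ where $A_{y,m} = \sum_{x: f(x,y)=0} t(x,y) p_e(m|x)$ and $B_{y,m} = \sum_{x: f(x,y)=1} t(x,y) p_e(m|x)$. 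Using the identity $\max\{A,B\} = A + B - \min\{A,B\}$ and summing over $(y,m)$, the total $A+B$ part collapses: $\sum_{y,m}(A_{y,m}+B_{y,m}) = \sum_{y} \sum_{x} t(x,y) \sum_m p_e(m|x) = \sum_{x,y} t(x,y) = 1$, by the normalization of $t$ and of $p_e(\cdot|x)$. Hence $S = 1 - \sum_{y,m}\min\{A_{y,m}, B_{y,m}\}$, and maximizing over encodings gives precisely \eqref{reducedfom}.

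The only mild subtlety — and the step I would treat most carefully — is the reduction from arbitrary (randomized, shared-randomness-assisted) strategies to deterministic ones, since one must be sure that allowing shared randomness does not help: it does not, because $S$ is an average over $\lambda$ of quantities each bounded by $S_c$, so the average is bounded by $S_c$ as well, and the deterministic strategy achieving $S_c$ needs no randomness. Everything else is bookkeeping: the binary-output assumption is what makes the per-message optimum a two-way $\max$, and the telescoping of $A+B$ relies only on $\sum_m p_e(m|x)=1$ and $\sum_{x,y} t(x,y)=1$, both given. I would also remark that the formula applies verbatim to the tasks $S^{(\widetilde G,\vec w,d)}$ and $S^G$ of the main text, since there $f$ is partial (some $(x,y)$ carry weight $t(x,y)=0$) and those terms simply drop out of every sum above.
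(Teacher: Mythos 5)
Your proposal is correct and follows essentially the same route as the paper's proof: fix the encoding, let Bob's optimal binary decoding pick $\max\{A_{y,m},B_{y,m}\}$ for each $(y,m)$, apply $\max\{A,B\}=A+B-\min\{A,B\}$ together with the normalizations $\sum_m p_e(m|x)=1$ and $\sum_{x,y}t(x,y)=1$, and invoke convexity to discard shared randomness and restrict to deterministic encodings. The only difference is ordering --- you reduce to deterministic strategies at the outset, whereas the paper optimizes the decoding first and removes $\lambda$ and stochastic encodings by convexity afterwards --- which does not change the substance of the argument.
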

\begin{proof}
Alice and Bob share prior classical random variables $\lambda$ with arbitrary distribution $p(\lambda)$.
Alice's strategy is to encode the information of input $x$ into classical message $m$. A general description of such a strategy is given by a set of probabilities $\{p_e(m|x,\lambda)\}$, where $p_e(m|x,\lambda)$ is the probability of sending the message $m$ depending on input choice $x$ and random variable $\lambda$. Here $m \in \{1,\dots,d\}$, since the dimension of the communicated system is at most $d$. Bob's decoding strategy is generally represented by the set of probability distribution $\{p_d(z|y, m,\lambda)\}$, where $p_d(z|y,m,\lambda)$ is the probability of output $z$ given his input $y$, received message $m$, and the shared random variable $\lambda$. Therefore, the general expression of the probability of getting outcome $z$ upon receiving inputs $x,y$ in classical communication is given by
\be 
p(z|x,y) = \sum_m \sum_\lambda p(\lambda) p_e(m|x,\lambda) p_d(z|y,m,\lambda)  .
\ee 
By dividing all inputs $x$ into two subgroups according to the value of $f(x,y)$ given every $y$ and substituting each term using the above in the figure of merit, we find that
\begin{align} \label{calints}
S_c & = \max_{\substack{ \{p_e(m|x,\lambda)\}\\ \{p_d(z|y,m,\lambda)\}\\ \{p(\lambda)\}} }
\sum_y \bigg( \sum_{x|f(x,y)=0} t(x,y) p(0|x,y) + \sum_{x|f(x,y)=1} t(x,y) p(1|x,y) \bigg) \nonumber \\
& =   \max_{\substack{ \{p_e(m|x,\lambda)\} \\ \{p_d(z|y,m,\lambda)\}\\ \{p(\lambda)\}} } \sum_\lambda p(\lambda) \left[ \sum_{y,m} \left\{ \left(\sum_{x|f(x,y)=0} t(x,y) p_e(m|x,\lambda)\right) p_d(0|y,m,\lambda)
 +  \left(\sum_{ x|f(x,y)=1} t(x,y) p_e(m|x,\lambda)\right) p_d(1|y,m,\lambda) \right\} \right]. 
\end{align}
Let us see the term within the curly bracket $\{ \ldots \}$. Since $p_d(0|y,m,\lambda) + p_d(1|y,m,\lambda) =1$ the best decoding probability $p_d(z|y,m,\lambda)$ is fixed such that the above expression is 
\be 
S_c = \max_{\substack{ \{p_e(m|x,\lambda)\} \\ \{p(\lambda)\}} } \sum_\lambda p(\lambda) \left[ \sum_{y,m} \max \left\{ \sum_{x|f(x,y)=0} t(x,y) p_e(m|x,\lambda) ,  \sum_{ x|f(x,y)=1} t(x,y) p_e(m|x,\lambda) \right\} \right] .
\ee 
Given any encoding probability $\{p_e(m|x,\lambda)\}$, the above expression within the square bracket $[ \ldots ]$ is a convex function of $\lambda$, and thus, without loss of generality, we can take $p(\lambda)=1$ for which that expression is maximum or, equivalently, we can omit the dependence of $\lambda$. This implies
\begin{equation}\label{Sc}
S_c = \underset{\{p_e(m|x)\}}{\text{max}} \sum_{y, m} \max \left\{ \sum_{x|f(x,y)=0} t(x,y) p_e(m|x) ,  \sum_{ x|f(x,y)=1} t(x,y) p_e(m|x) \right\} .
\end{equation} 
Further, using the identity, $\max\{a,b\} = a + b - \min \{a,b\}$ for any non-negative number $a,b$, Eq. (\ref{Sc}) further reduces to
\be
S_c = \underset{\{p_e(m|x)\}}{\text{max}} \left[\sum_{y, m,x}   t(x,y) p_e(m|x) 
- \sum_{y,m}  \min \left\{ \sum_{x|f(x,y)=0} t(x,y) p_e(m|x) ,  \sum_{ x|f(x,y)=1} t(x,y) p_e(m|x) \right\} \right] .
\ee 
Using $\sum_{m} p_e(m|x) = 1, \forall x$, and $\sum_{x,y} t(x,y) =1$ in the above equation, we get \eqref{reducedfom}.
Finally, note that it is sufficient to consider the extremal values of $p_e(m|x)$ since the expression is convex. 
\end{proof}


\begin{proof}[Proof of Theorem 1]
Using Eq.~(\ref{reducedfom}) for $S^{(\widetilde{G},\vec{w},d)}$ given by Eq. (5),
we can express the optimal classical value as follows:
\bea
	S_c & = & \frac{1}{N} \underset{\{p_e(m|x)\}}{\text{max}} \Bigg[N - \sum_{y=1}^n \sum_{m=1}^d \text{min} \left\{p(m|x=y) + w_y p(m|x=0), \sum_{x \in N_y} p(m|x) \right\} \nonumber \\ 
	&& \qquad \qquad\qquad - \sum_{y=n+1}^{n+k} \sum_{m=1}^d \text{min} \left\{ p(m|x=y), \sum_{x\in N_y} p(m|x) \right\} \Bigg] ,
	\label{Sc1}
\eea 
wherein the encoding strategy $\{p_e(m|x)\}$ is deterministic, that is,
for each input $x$, Alice sends a classical level $m$ out of $d$ distinct levels. This is equivalent to assigning one of the $d$ colors to each input $x$.
Since $d \leqslant \chi(G)$, in general, $d$ colors are not sufficient to color all the vertices of the extended $\widetilde{G}$ properly. Say $\delta$ is the minimum number of `improperly colored' vertices when $d$ colors are used to color all the vertices of the extended exclusivity graph. A vertex is `improperly colored' if it has at least one neighbour that shares the same color.
We would like to find the optimal encoding strategy $\{p_e(m|x)\}$ that maximizes $S_c$ in \eqref{Sc1}. 
Say, the optimal encoding is such that, there are $\delta + q$ improperly colored vertices, that is, out of all the vertices, $n+k-\delta-q$ vertices are properly colored, where $q \in \{0, \ldots, n+k-\delta\}$ is an integer. Let us denote the level/color by $m_x$ that is
assigned to input $x=0,1,\dots,n+k$. There are three possibilities for each input $y$:
\begin{enumerate}
	\item \textit{y is not properly colored.} In this case there exists at least one $x \in N_y$ such that $p(m_y|x) = 1$. Therefore, 
	\be 
	\sum_{m} \text{min} \left\{p(m|x=y) + w_y p(m|x=0), \sum_{x \in N_y} p(m|x) \right\} \geqslant 1.
	\ee 
	
	\item \textit{y is properly coloured and $y \in \{1, \ldots, n\}$.} In this scenario, either $p(m_0|x=y)=0$, or, $p(m_0|x=y) =1$. For the former case, since every $y$ belongs to at least one $d$-clique there exists one $y \in N_x$, for which $p(m_0|x) =1$. Therefore,
	\be
	\text{min} \left\{p(m_0|x=y) + w_y p(m_0|x=0), \sum_{y\in N_x} p(m_0|x) \right\} \geqslant \min \{w_y,1\} = w_y .
	\ee 
	Here, we have used our convention for the contextuality witness that,
	\be \label{maxwi}
	\max_y w_y = 1. 
	\ee 
    While for the latter case,  
	\be 
	\text{min} \left\{p(m_0|x=y) + w_y p(m_0|x=0), \sum_{y\in N_x} p(m_0|x) \right\} = 0.
	\ee 
	
	\item  \textit{y is properly coloured and $y \in \{n+1, \ldots, n+k\}$.} In this case,  
	\be 
	\sum_m \text{min} \left\{p(m|x=y) , \sum_{y\in N_x} p(m|x) \right\} = 0.
	\ee 
\end{enumerate}
As the encoding strategy properly colors $n+k-\delta-q$ vertices, the above analysis implies
\be
 \sum_y \sum_{m} \text{min} \left\{p(m|x=y)
+w_y\ p(m|x=0), \sum_{x \in N_y} p(m|x) \right\}  \geqslant \delta + q + \sum\limits_{y|p(m_0|x=y) = 0} w_y.
\ee
Using this in Eq.~(\ref{Sc1}), we obtain 
\begin{equation}
	S_c \leqslant \underset{p_e(m|x)}{\text{max}} \frac{1}{N} \left[N - \delta -q-\sum_{y|p(m_0|y) = 0} w_y \right].
\end{equation}
Replacing $N$ by $n+k+\sum_{x=1}^{n+k}|N_x| + \sum_y w_y$ from Eq. (6)
in the above equation, we obtain
\begin{eqnarray}
	S_c &\leqslant &  \underset{p_e(m|x)}{\text{max}}\frac{1}{N} \left[n+k+\sum_{x=1}^{n+k} |N_x| -\delta - q + \sum_{y} w_y - \sum_{x|p(m_0|x=y)=0} w_y \right] \nonumber \\
	& = & \underset{p_e(m|x)}{\text{max}}\frac{1}{N} \left[n+k+\sum_{x=1}^{n+k} |N_x| - \delta - q + \sum_{x|p(m_0|y)=1} w_y \right].
\end{eqnarray}
If $q=0$, then 
\be 
\sum_{x|p(m_0|y)=1} w_y  \leqslant \alpha.
\ee 
On the other hand, if $q > 0$, then 
\be
\sum_{x|p(m_0|y)=1} w_y  \leqslant  \alpha + \sum_y w_y - q,
\ee 
where the sum over $y$ on right-hand-side is taken for $q$ number of different indices. It follows from Eq. \eqref{maxwi}
that the above is bounded by $\alpha + q \cdot (\max_y w_y) - q \leqslant \alpha .$
Therefore, the best strategy is choosing $q=0$, which implies Eq. (9).
\end{proof} 

\begin{proof}[Proof of Theorem 2]
  Eq. (10)
  implies 
  \be \label{myrx}
  \forall y, M_{0|y}^{\rho_{x=y}} = \I, \  M_{1|y}^{\rho_x} = \I, \text{ for } x\in N_y, 
  \ee 
  where $M_{z|y}^{\rho_x} $ denotes the reduced form of $M_{z|y}$ in the support of $\rho_x$. Note that $M_{0|x}^{\rho_{x}}$ are projectors and we must have 
  \be \label{rxrx'0}
  \forall x\sim x', \  (M_{0|x}^{\rho_{x}}) \ (M_{0|x'}^{\rho_{x'}}) = \mathbbm{O} , \text{ and } \rho_x \rho_{x'} = \mathbbm{O} .
  \ee 
 The size of the maximum clique in the extended graph $\widetilde{G}$ is $d$. Say, $\{1,\dots,d\}$ is a maximum clique. The conditions \eqref{rxrx'0} for any pair of $x,x' \in \{1,\dots,d\}$ and $\rho_x \in \mathbbm{C}^{d}$ hold true \textit{if and only if} $\rho_x$ are rank-one projectors and satisfy orthogonality relations according to $\widetilde{G}$. Moreover, it follows from \eqref{myrx} that 
 \be 
 M_{0|1}^{\rho_{1}} + \sum_{x=2}^{d} M_{1|1}^{\rho_{x}}  \geqslant \I_{d\times d} .
 \ee 
However, the normalization condition $M_{0|1}+M_{1|1}=\I_{d\times d}$ holds \textit{if and only if} $M^{\rho_x}_{0|x}=\rho_x$ for all $x=1,\dots,d$. Due to the fact that every input belongs to at least one maximum clique, the analysis holds true for all $x$. 
\end{proof} 


\begin{proof}[Proof of Theorem 3]
  $S^G$ given by Eq. (11)
  is $1$ if
  \be \label{sbar1}
  p(0|x,y=x) = p(1|x,y\in N_x) = 1 ,
  \ee 
  which implies Eq.~\eqref{myrx}-\eqref{rxrx'0}. Consequently, $\{M_{0|x}^{\rho_{x}}\}$ satisfies the orthogonality relations as per the exclusivity graph $G$. Moreover, the new set of measurements defined by $M_{0|x} = M_{0|x}^{\rho_x}$ also achieves $S^G=1$ for the same Alice's encoding strategy $\{\rho_x\}$. Therefore, for any quantum strategy that yields $S^G=1$, there exists a set of projectors that realizes $G$. In the reverse direction, for any set of projectors $\{\Pi_i\}$ realizing $G$, the strategy in which $M_{0|y}=\Pi_y, \rho_x = \Pi_x/\text{rank}(\Pi_x)$ achieves $S^G=1$. So, there is a one-to-one correspondence between quantum strategy achieving the perfect figure of merit and a set of projectors satisfying the orthogonality relations according to $G$. Thus, $Q(G)$ is the minimum dimension $d$ such that a set of projectors $\{\Pi_i\}$ acting on $\mathbbm{C}^d$ exists, and moreover, $Q(G)\leqslant d_{\min}$. \\
 In classical communication, it follows from \eqref{reducedfom} that 
  \be 
	S^G_c = \frac{1}{N} \underset{\{p_e(m|x)\}}{\text{max}} \left[N - \sum_{y=1}^n \sum_{m=1}^d \text{min} \left\{p(m|x=y), \sum_{x \in N_y} p(m|x) \right\} \right] ,
	\label{sSc1}
\ee 
which is 1 if and only if
\be 
\forall x,y, \ \text{min} \left\{p(m|x=y), \sum_{x \in N_y} p(m|x) \right\} = 0 .
\ee 
Since, without loss of generality, we can take $p_e(m|x)$ to be deterministic, the above implies if $p_e(m|x)=1$ for some $x$ then $\forall x'\in N_x, p_e(m|x')=0$. Therefore, the set of $\{m\}$ can be used to color the graph, and similarly, any set of colors used to color the graph can also be used as $\{m\}$ to achieve $S^G=1$. Thus, $C(G)$ is the minimum of number colors required to color the vertices of $G$.
\end{proof}

\begin{proof}[Proof of Theorem 4]
 To prove $Q(G^m) \leqslant (d_{\min})^m$, it suffices to show that there exists a set of projectors realizing $G^m$ in $d^m$ dimensional Hilbert space. Consider the realization where the vertex $(i_1,\dots,i_m)$ (where each $i_k\in\{1,\dots,n\}$) of $G^m$ is represented by the projector $\Pi_{i_1}\otimes \Pi_{i_2} \otimes \dots \otimes \Pi_{i_m}$. Clearly, if $i_k \sim j_{k}$ for any $k$, then the respective two projectors are orthogonal as $\Pi_{i_k}$ and $\Pi_{j_{k}}$ are orthogonal. \\
On the other hand, due to Theorem 3, we know that $C(G^m) = \chi(G^m)$. For any graph $G$, we have $\chi(G) \geqslant \chi_f(G)$. In addition, $\chi_f(G^m) = (\chi_f(G))^m$ according to Lemma 2.8 in Ref.~\cite{feige1995randomized}. Therefore, Eq. (13)
holds true. 
\end{proof} 

\textit{Remark.} We have another lower bound on $C(G^m)/Q(G^m)$ as follows.
\be \label{qm/cm} 
\frac{C(G^m)}{Q(G^m)} \geqslant \frac{2}{m\ln{n}} \left(\frac{\chi(G)}{d_{\min}}\right)^m,  \text{ for } m=2^q, q \in \mathbbm{N},
\ee
\begin{proof}
we make use of Theorem 2.3 of \cite{Feige1997} that states the following relation:
\be 
\chi(G\times H) \geqslant \frac{\chi(G)\cdot \chi(H)}{\ln{n}} 
\ee 
when both graphs $G$ and $H$ have same number of vertices $n$. Taking $H$ to be $G$, we have $\chi(G^2)\leqslant \chi(G)^2/\ln{n}$. Again, we taking product of two $G^2$ of the same number of vertices $n^2$, we have $\chi(G^4)\leqslant \chi(G)^4/\ln{n^2}$, and similarly, we can obtain 
\be 
\chi(G^m) \geqslant \frac{2\cdot \chi(G)^m}{m\ln{n}} 
\ee 
for any $m=2^q, q \in \mathbbm{N}$. Finally, using the facts that $C(G^m)= \chi(G^m), Q(G^m) \leqslant (d_{\min})^m$ and the above relation, we arrive at  \eqref{qm/cm}. 
\end{proof}


\section{SI contextuality witness and exponential separation in communication complexities}\label{app:ls}

Firstly, we point out the correspondence between the large classical vs. quantum communication complexities gap in distributed Deutsch-Jozsa task \cite{buhrman1998quantum,RevModPhys.82.665} and the equality problem (Eq. (11)) 
with respect to exclusivity graphs. Consider the set of vectors in $\mathbbm{C}^d$ of the form 
\be \label{xs}
(1/\sqrt{d})\left[(-1)^{x_1},(-1)^{x_2}, \ldots, (-1)^{x_d} \right]^T,
\ee 
where every $x_i\in \{0,1\}$. There is $2^d$ number of distinct vectors in this set, and two vectors from this set are orthogonal whenever values of $x_i$'s are different in exactly $d/2$ number of places. We can consider the graph, often called Hadamard graph $(G_{H_d})$, representing the orthogonality relations of the set of vectors in $\mathbbm{C}^d$, and thereupon, the equality problem defined by Eq.~(11) with respect to $G_{H_d}$.
The result by Frankl-R\"{o}dl \cite{FR} implies that $\alpha(G_{H_d}) \leqslant 1.99^d$, whenever $d$ is divisible by $4$ (Theorem~1.11 in \cite{FR}). Plugging this bound into the general relation $\chi(G)\geqslant n/\alpha(G)$, we find that for the Hadamard graph $\chi(G_{H_d})\geqslant (2/1.99)^d$. Therefore, the difference between classical and quantum communication complexities is at least
$0.007 d - \log_2 d$ bits. \\

We now consider the Newman graph \cite{newman2004independent} introduced in the main text. It is defined by the orthogonality relation of the set of vectors in $\mathbbm{C}^d$ that takes the form \eqref{xs}
where $x_1=0$ and in every vector the number of $x_i$ taking value $1$ is even. There are $2^{d-2}$ such vectors in $\mathbbm{C}^d$, and we denote this set by $\{|\phi_i\ra \}_{i=1}^{2^{d-2}}$.
Let us again take $d$ to be divisible by 4. It has been pointed out in \cite{ZParXiv2022} that
\be \label{thetaNG}
\sum_{i=1}^{2^d-2} |\phi_i\ra\!\la\phi_i| = \frac{2^{d-1}}{d} \I .
\ee 
On the other hand, due to \textit{Lemma 6.6.1} of \cite{newman2004independent} we know 
\be 
\alpha(G_{N_d}) = \frac{\alpha(G_{H_d})}{4}.
\ee 
Consequently, using the aforementioned result by Frankl-R\"{o}dl \cite{FR} we have
\be \label{alphaNG}
\alpha(G_{N_d}) \leqslant \frac{(1.99)^{d}}{4}.
\ee 
Comparing the above quantity with the right-hand-side of \eqref{thetaNG}, we see that $\{(G_{N_d},\vec{w}),\{|\phi_i\ra\!\la\phi_i|\}\}$ is SI contextuality witness (where $w_i=1$ for all $i$) if
\be 
\frac{2^{d-1}}{d} > \frac{(1.99)^{d}}{4} \implies d\geqslant 1128 . 
\ee 
It has been shown in \cite{ZParXiv2022} that $\{(G_{N_d},\vec{w}),\{|\phi_i\ra\!\la\phi_i|\}\}$ is SI contextuality witness for $d=28,32$.

In order to get a lower bound on the classical communication complexity of the respective equality problem, we plug the relation \eqref{alphaNG} into $\chi(G)\geqslant n/\alpha(G)$ and find that $\chi(G_{N_d})\geqslant (2/1.99)^{d}$. Finally, since $Q(G_{N_d})=d$ by its construction, we obtain Eq.~(14) in the main text.


\section{Examples}\label{app:qkd}


It follows from Eq.~(8) in the main text
that the maximum quantum value for $S^{(\widetilde{G},\vec{w},d)}$ from all $\{(G,\Vec{w}), \{|\psi_i\rangle\!\langle \psi_i|\}_{i=1}^n,\rho \}$, with $\rho \in \mathcal{O}(\mathbbm{C}^d) $, is given by
\be 
S^{(\widetilde{G},\vec{w},d)}_\beta := \frac{1}{N} \left[n+k + \sum_{x=1}^{n+k} |N_x| + \beta(G,\Vec{w}) \right],
\label{fomq}
\ee 
where 
\be \label{theta}
\beta(G,\Vec{w}) = \max\limits_{\{\{|\psi_i\rangle\!\langle \psi_i|\},\rho\}} \
\sum_{i=1}^n w_i \ \text{tr}(\rho |\psi_i\rangle\!\langle \psi_i|)
\ee 
and $\{|\psi_i\rangle\!\langle \psi_i|\}$ is a realization of $G$. 

\subsection{Robustness of the quantum advantage in communication tasks}

Here, we discuss the robustness of the quantum advantage in the presence of white noise. We take $\mu \in (0,1]$ to be the sharpness parameter or $(1-\mu)$ to be the parameter quantifying the amount of white noise. In such case, the communicated quantum state will be
\be
    \Tilde{\rho}_x = \mu \rho_x + (1-\mu) \frac{\I}{d} ,
\ee
where $\rho_x$ is the state that Alice wants to send given in Eq.~(7) in the main text. Taking into consideration Bob's measurements given by Eq.~(7) in the main text, the new probabilities are
\begin{eqnarray}
    p(0|x, y =x) &=& \mu + \frac{(1-\mu)}{d}, \nonumber \\ 
    p(1|x, y \in N_x) &=& \mu + \frac{(d-1)(1-\mu)}{d}, \nonumber \\
    p(0|x=0, y) &=& \mu \text{tr}(\rho |\psi_y\rangle \langle \psi_y|) + \frac{(1-\mu)}{d}.
\end{eqnarray}
Substituting these probabilities, we calculate the modified value of $S^{(\widetilde{G},\vec{w},d)}$ as equal to
\be
    \frac{1}{N}\Bigg[\mu(n+k) + \frac{(1-\mu)(n+k)}{d} + \mu \sum_{x=1}^{n+k}|N_x| + \frac{(1-\mu)(d-1)}{d}\sum_{x=1}^{n+k}|N_x| + \mu \sum_{y=1}^n w_y \text{tr}(\rho |\psi_y\rangle \langle \psi_y|) + \sum_{y=1}^n w_y \frac{(1-\mu)}{d}\Bigg] .
\ee
Now, we reckon the best quantum strategy defined in Eqs.~(\ref{fomq})--(\ref{theta}) to find the maximum modified quantum value in the presence of noise is
\be \label{qsrobust}
    \mu S^{(\widetilde{G},\vec{w},d)}_{\beta}+\frac{(1-\mu)}{dN} \Bigg[n+k+\sum_{x=1}^{n+k}|N_x|+\sum_{y=1}^n w_y + (d-2)\sum_{x=1}^{n+k}|N_x|\Bigg] .
\ee
Using the expression of $N$ from Eq.~(6) in the main text, we get a simplified form of the above expression,
\be \label{qsf_robust}
    \mu S^{(\widetilde{G},\vec{w},d)}_{\beta}+\frac{1-\mu}{d}+\frac{(1-\mu)(d-2)}{N}\sum_{x=1}^{n+k}|N_x|.
\ee
Subsequently, the quantum advantage persists this modified quantum value is greater than the best classical value, that is,
\be
    \mu S^{(\widetilde{G},\vec{w},d)}_{\beta}+\frac{1-\mu}{d}+\frac{(1-\mu)(d-2)}{N}\sum_{x=1}^{n+k}|N_x|>S^{(\widetilde{G},\vec{w},d)}_c ,
\ee
which, after a reshuffling of parameters, implies 
\be
    \mu>1-\frac{d N (S^{(\widetilde{G},\vec{w},d)}_{\beta}-S^{(\widetilde{G},\vec{w},d)}_c)}{dN S^{(\widetilde{G},\vec{w},d)}_{\beta}-N-(d-2)\sum_{x=1}^{n+k}|N_x|}:=\mu_c .
\ee
Here, $\mu_c$ is the critical value of $\mu$ (sharpness parameter) up to which there will not be any quantum advantage. In other words, to observe quantum advantage the sharpness parameter must be greater than $\mu_c$. The critical values of the sharpness parameter in our communication task for several well-known contextuality witnesses are computed in the third column of Table~\ref{table}.

\subsection{Communication complexity advantage based on extended 5-cycle graph}


Consider the first figure of the 5-cycle graph in Table~\ref{table}. There are total of five events and three extensions. So, $n = 5$ and $k = 3$. We take weighs $w_i = 1 \ \forall i\in \{1,\ldots,5\}$. The classical bound for this contextuality witness $\alpha(G, \Vec{w}) = 2$ whereas quantum bound $\beta(G, \Vec{w}) = \sqrt{5}$. 
Alice chooses an input $x$ from the set $[x] = \{0,1,\ldots,8\}$. Bob chooses an input $y$ from $[y] = \{1,2,\ldots,8\}$. The task is to maximize the figure of merit in Eq.~(5) of the main text.
For this, let us first determine $N$ using Eq.~(6) in the main text.
That is,
\begin{align}
N &= 5 + 3 + |N_1|+|N_2|+|N_3|+|N_4|+|N_5|+|N_6|+|N_7|+|N_8|+\sum_{i=1}^5 w_i \nonumber \\
 &= 5+3+3+3+3+3+4+2+2+2+5 = 35.
\end{align}
Using this and $\sum_{x=1}^8 |N_x| = 22$ in Eq.~(9) in the main text,
we get the maximum value of the figure of merit with classical settings as $S_c = 0.914$. 
On the other hand, the maximum figure of merit with quantum settings using Eq.~(\ref{fomq}) is $S_{\beta} = 0.921$. This shows a quantum advantage. Similarly, we have shown the quantum advantage in the considered communication task for other contextuality witnesses in the third column of Table~\ref{table}.


\subsection{Communication complexity advantages and key rates of the respective QKD protocol based on a few well-known quantum contextuality witnesses}


In order to calculate the average key rate per transmission, we compute the Shannon information of the transmitted string which is given as $E = -P_0 \log_2 (P_0) -P_1 \log_2 (P_1)$, where $P_0(P_1)$ is the probability of bit 0(1) in the key. Let $P_s$ be the success probability of generating the key, then the average key generation per transmission or the key rate is $P_s \cdot E$. The key rates for some contextuality witnesses are given in Table~\ref{table}. However, this key rate may not be secure.

The secure key rate ($\bf{r}$) is calculated using Eq. (17).
For this, we estimate the mutual information between Alice and Bob/Eve as follows \cite{Marcin2011}:
\be
I(A:X)= \sum_{j=0}^1 1 - h(S_{X_j}), \quad \text{where $X$ is $B$ or $E$.}
\ee
Here, $S_{B_j}(S_{E_j})$ is the figure of merit of guessing $f(x,y) \in \{0,1\}$ for Bob (Eve), $h(S_{X_j})$ is the Shannon binary entropy. We have taken $S_{B_j} = S_{\beta}$ and $S_{E_j} = S_c \forall \{0,1\}$. The secure key rates for some contextuality witnesses are also provided in  Table~\ref{table}. Note that Alice and Bob can apply the privacy amplification using Toeplitz matrix-based hash function on raw keys to make it secure.


\begin{table}
   \centering
\begin{tabular}{|c|c|c|c|}
\hline
NC & Graph & Communication complexity advantage & Quantum key distribution \\ 
\hline \hline
 5-cycle & \cell{\includegraphics[scale=0.3]{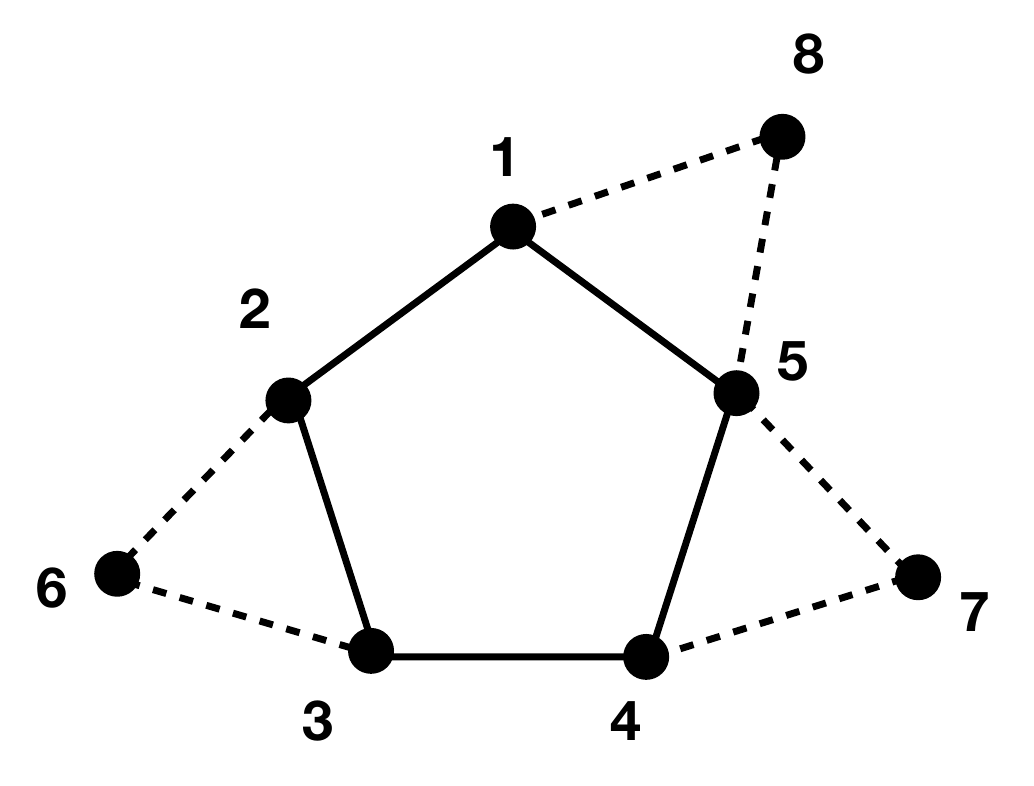}} & \cell{
 \small{$d=3,\ \chi(G)=3,\ \delta=0$} \\
 \small{$n=5, \ k=3 $} \\ 
 \small{$w_i=1, \ \forall i = 1,\dots,5$}
 \\
 \small{$\alpha(G,\Vec{w})=2, \quad \beta(G,\Vec{w}) = \sqrt{5}$} \\
 $N = 8 + ( 4 + 4\cdot 3 + 3\cdot 2 ) + 5 = 35$ \\
       \small{$S_c= \frac{1}{35}(8+22+2) = 0.914$ }  \\
       \small{$S_{\beta}= \frac{1}{35}(8+22+\sqrt{5}) = 0.921$}   \\
        \small{$\mu_c = 0.981$}}
  &  \cell{\small{$P_s
  = \frac{35}{72} 
  = 0.486$} \\
        \small{$P_0=\frac{(8+\sqrt{5})}{35} = 0.292$}   \\
        \small{$P_1=\frac{22}{35} = 0.628 $}\\
        \small{\textbf{Key rate = 0.457}}       \\  
        \small{\textbf{r = 0.049}}}       \\\hline
        
5-cycle* & \cell{\includegraphics[scale=0.3]{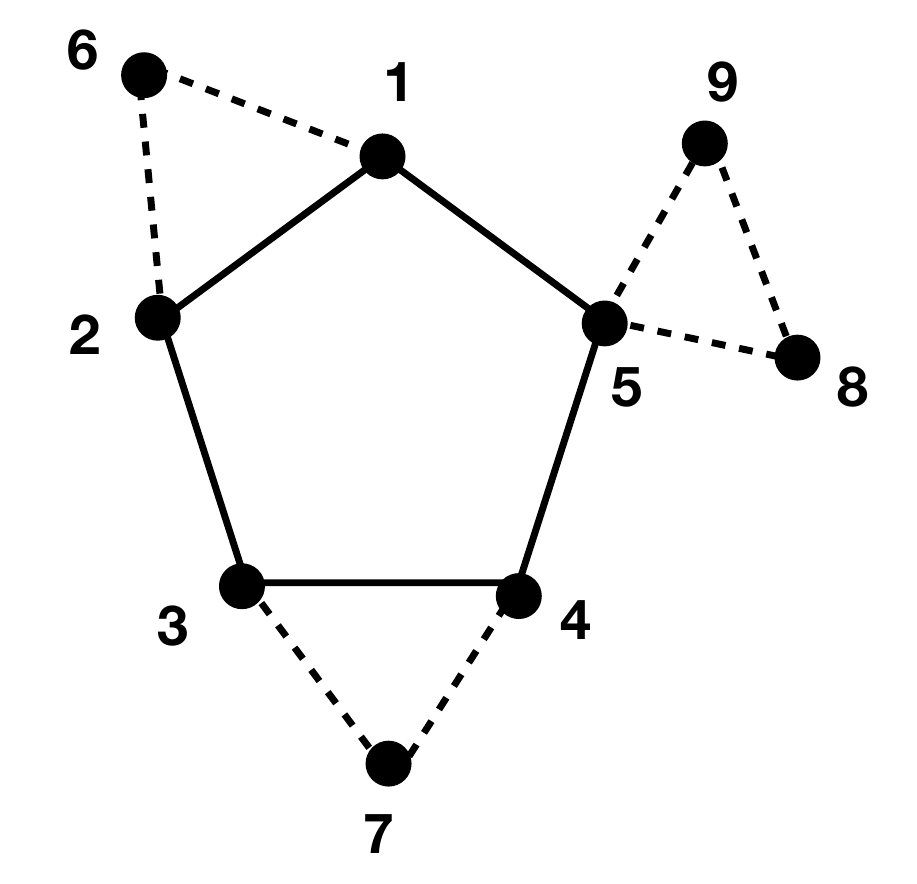}} & \cell{
 \small{$d=3,\ \chi(G)=3,\ \delta=0$} \\
\small{$n=5, \ k=4$} \\
\small{$w_i=1, \ \forall i = 1,\dots,5$}
 \\
$\alpha(G,\Vec{w})=2, \quad \beta(G,\Vec{w}) = \sqrt{5}$ \\
$N = 9 + ( 4 + 4\cdot 3 + 4\cdot 2 ) + 5 = 38$ \\
       $S_c= \frac{1}{38}(9+24+2) = 0.921$   \\
       $S_{\beta}= \frac{1}{38}(9+24+\sqrt{5}) = 0.927$ \\
       \small{$\mu_c = 0.984$}}
  &  \cell{\small{$P_s
  = \frac{38}{90} 
  = 0.422$} \\
        \small{$P_0=\frac{(9+\sqrt{5})}{38} = 0.296$}   \\
        \small{$P_1=\frac{24}{38} = 0.632$}\\
        \small{\textbf{Key rate = 0.396}}       \\  
        \small{\textbf{r = 0.043}}}       \\\hline
        
7-cycle & \cell{\includegraphics[scale=0.4]{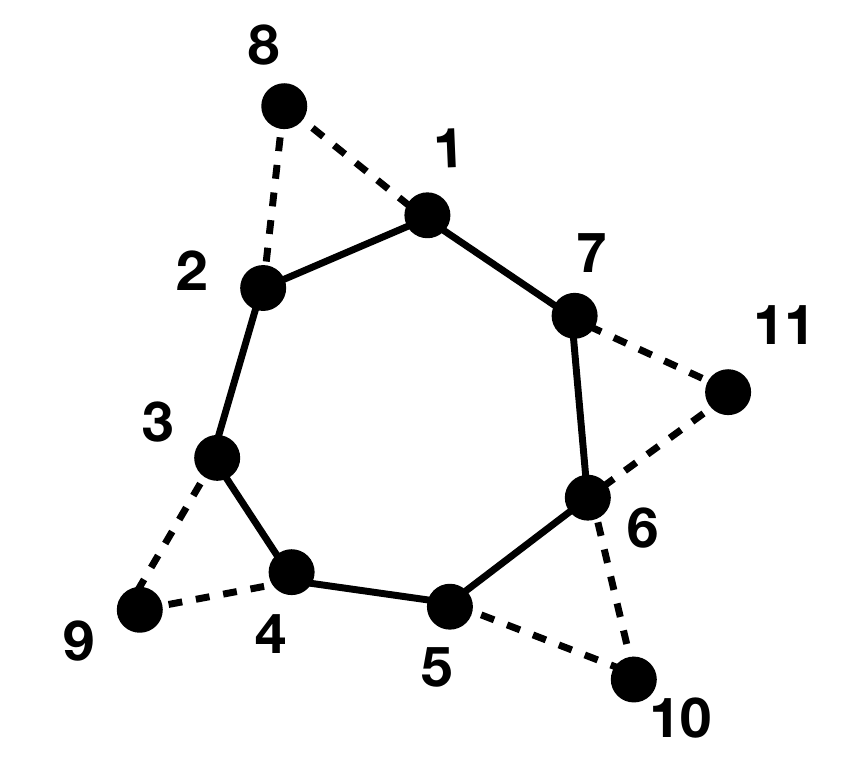}} & \cell{
 \small{$d=3,\ \chi(G)=3,\ \delta=0$} \\
\small{$n=7, \ k=4$ }\\
\small{$w_i=1, \ \forall i = 1,\dots,7$}
 \\
$\alpha(G,\Vec{w}) = 
3, \quad \beta(G,\Vec{w}) = \frac{7 \cos{\pi/7}}{1+\cos{\pi/7}}$ \\
$N = 11 + ( 4 + 6\cdot 3 + 4\cdot 2 ) + 7 = 48$ \\
        $S_c= \frac{1}{48}(11+30+3) = 0.917$   \\
    $S_{\beta}= \frac{1}{48}(11+30+\frac{7 \cos \pi/7}{1+\cos \pi/7}) = 0.923$\\
    \small{$\mu_c = 0.984$}}
    
  &  \cell{\small{$P_s
  = \frac{48}{132} = 0.364$} \\
        $P_0=\frac{\Big(11+\frac{7 \cos{\pi/7}}{1+\cos{\pi/7}}\Big)}{48} = 0.298$   \\
        \small{$P_1=\frac{30}{48} = 0.625$}\\
        \small{\textbf{Key rate = 0.343}}       \\  
        \small{\textbf{r = 0.042}}}       \\\hline
        
 CEG-18 & \cell{\includegraphics[scale=0.3]{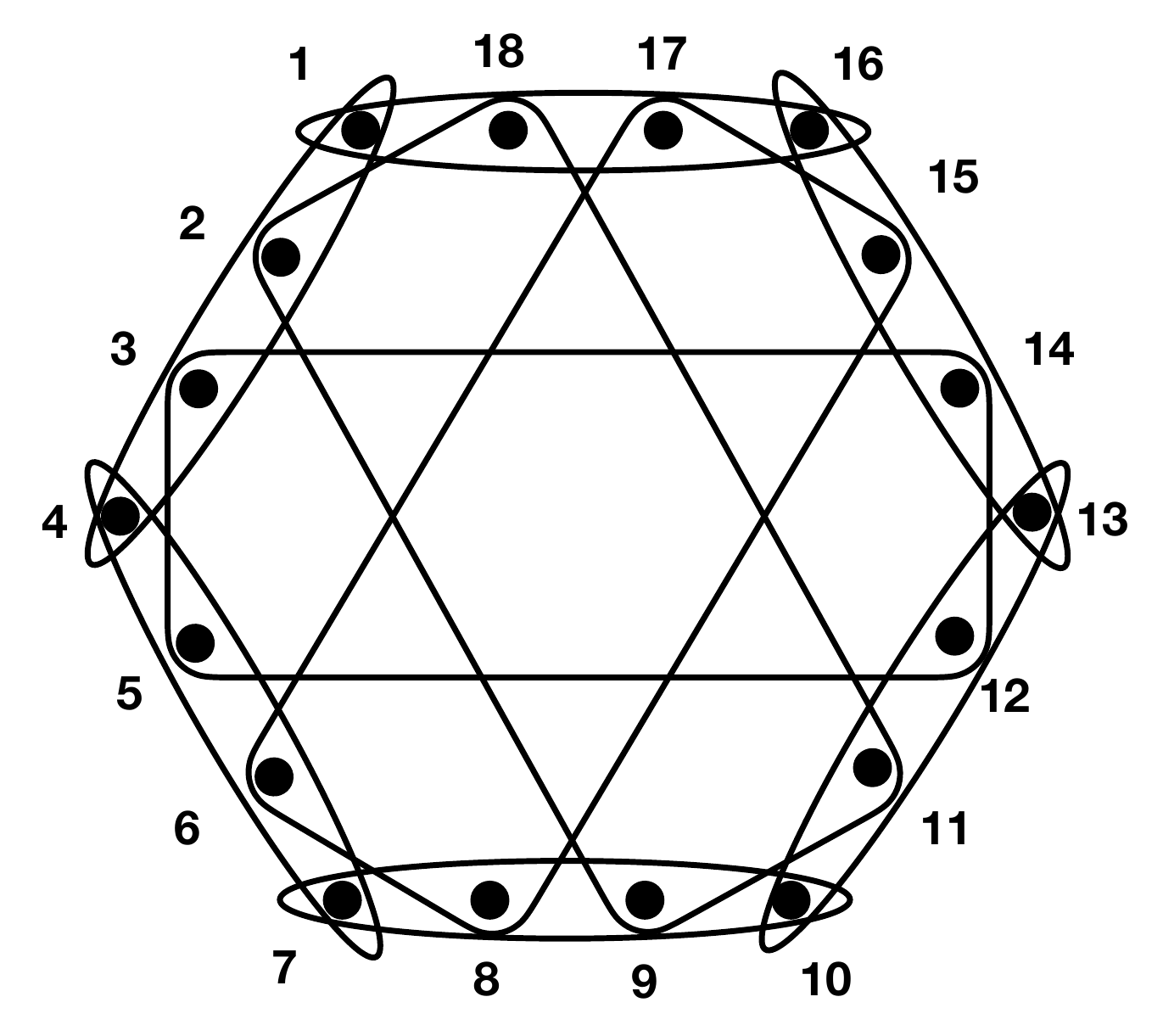}} & \cell{
  \small{$d=4,\ \chi(G)=5,\ \delta=3$} \\
 \small{$n=18, \ k=0$} \\
 \small{$w_i=1, \ \forall i = 1,\dots,18$}
 \\
 $\alpha(G,\Vec{w}) = 
 4, \quad \beta(G,\Vec{w}) = 9/2$ \\
 $N = 18 + ( 18\cdot 6 ) + 18 = 144$ \\
        \small{$S_c= \frac{1}{144}(18+108+4-3) = 0.882$}   \\
        $S_{\beta}= \frac{1}{144}(18+108+\frac{9}{2}) = 0.906$  \\
     \small{$\mu_c = 0.914$}}
  &  \cell{\small{$P_s
  =\frac{144}{342}
  = 0.421$}   \\
        \small{$P_0=\frac{(18+9/2)}{144}
         = 0.156$}   \\
        \small{$P_1=\frac{108}{144}
     = 0.75$}\\
        \small{\textbf{Key rate = 0.307}}}       \\  \hline
 
 YO-13 & \cell{\includegraphics[scale=0.35]{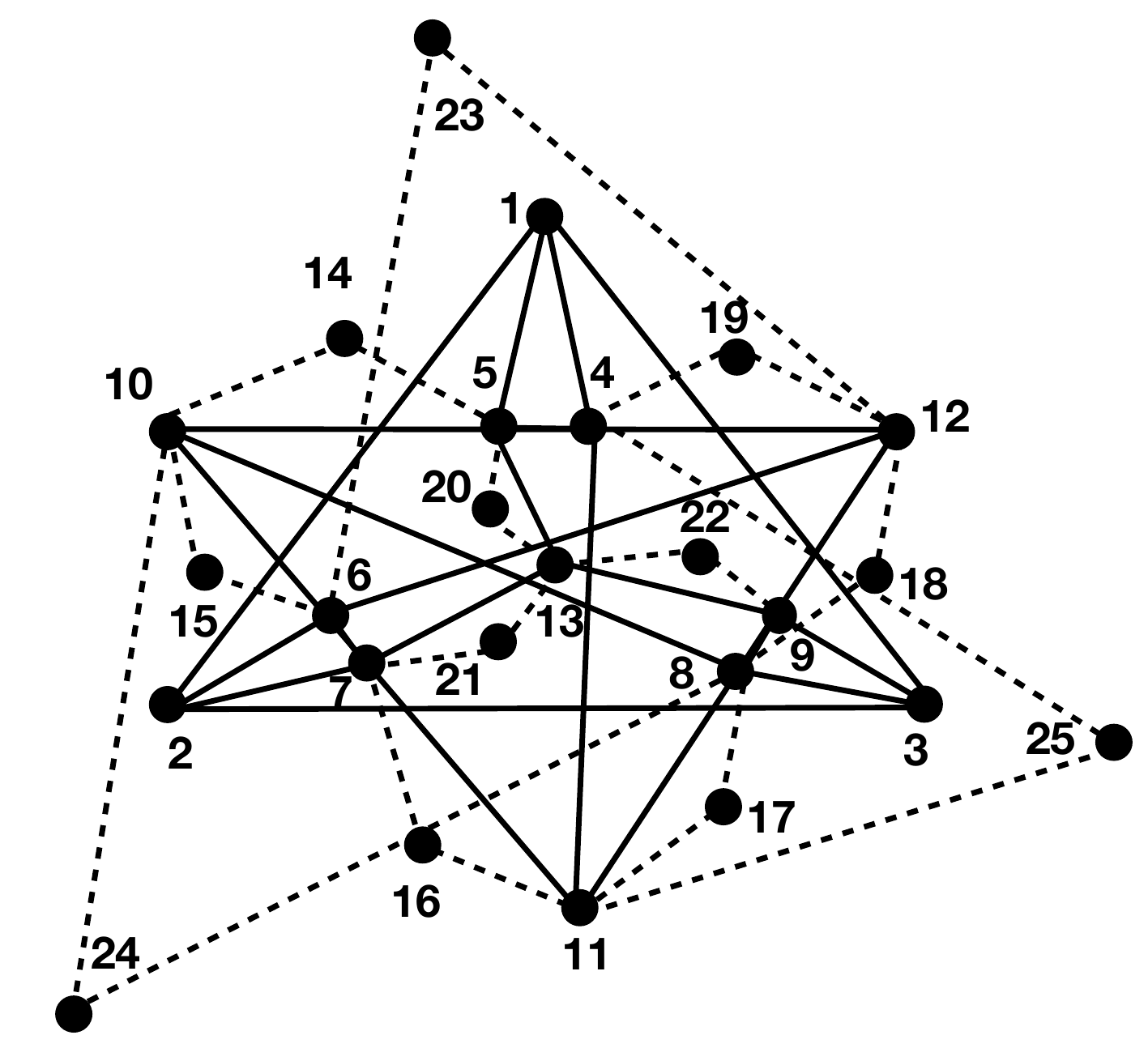}} & \cell{
  \small{$d=3,\ \chi(G)=4,\ \delta=2$} \\
 \small{$n=13, \ k=12$} \\
 \small{$w_i=1, \ \forall i = 1,\dots,9$}
 \\
 \small{$w_i=1/2, \ \forall i = 10,11,12,13$}
 \\
 $\alpha(G,\Vec{w}) = 
 7/2, \quad \beta(G,\Vec{w}) = 11/3 $ \\
 $N = 25 + ( 24 + 60 + 12 ) + 13 = 134$ \\
        \small{$S_c= \frac{1}{134}(25+96+\frac{7}{2}-2) = 0.914$}   \\
        \small{$S_{\beta}= \frac{1}{134}(25+96+\frac{11}{3}) = 0.930$} \\
        \small{$\mu_c = 0.955$}}
        
  &  \cell{\small{$P_s
  = \frac{134}{650} = 0.206$} \\
        \small{$P_0=\frac{(25+11/3)}{134} = 0.214$}   \\
        \small{$P_1=\frac{108}{134} = 0.806$}\\
        \small{\textbf{Key rate = 0.15}}}       \\\hline     
\end{tabular}
\caption{Quantum advantage in the one-way communication and quantum key distribution tasks described in the main text based on some quantum contextuality witnesses. The extended graphs are given in the second column, in which the additional edges are drawn by dashed lines. We have considered two different extensions of the 5-cycle graph in the first two rows. In the CEG-18 graph \cite{Cabello1996}, the nine 4-cliques are represented by the nine closed lines. In this case, no extension is needed. The details of the communication complexity task are given in the third column. The last column contains values of key rates and secure key rates ({\bf r}) in the presence of Eve for the corresponding semi-device independent QKD protocol. Note that we have not provided the values of {\bf r} for the CEG-18 and YO-13 \cite{Yu2012} graphs since these contextuality witnesses are not known to satisfy monogamy relation.}
\label{table}
\end{table} 

\newpage


\section{Randomness certification} \label{app:rc}

Schemes for quantum randomness generation have been proposed based on quantum advantages in communication tasks  \cite{randomness,randomness1,randomness2}. We can also use the communication complexity task introduced in Eq. (4)-(5)
to generate secure random bits from the untrusted preparation and measurement devices under the assumptions ($i$)-($ii$) mentioned in the preceding section. The random bits is obtained from the measurement outcome $z$.  Notice that for $x\neq 0$, all the probabilities appearing in the figure of merit (5)
are deterministic in the quantum strategies coming from contextuality witnesses. This enforces us to obtain the randomness only when $x=0$, which is suitably quantified by the minimum entropy function as follows:
\bea \label{Hinf}
&& \mathbf{H}_{\infty} = -\log_2 \big[ \max_{z,y} p(z|x=0,y) \big]  \\
    & \text{Subject to: } & \rho_x \in \mathbbm{C}^{d_{\min}}, \  S_o \in (S^{(\widetilde{G},\vec{w},d)}_{c},S^{(\widetilde{G},\vec{w},d)}_\beta ], \ \text{Eq. (10)}, 
    \nonumber
\eea
where $S_o$ is the obtained value of $S^{(\widetilde{G},\vec{w},d)}$. 
The above quantity is not necessarily nonzero for any contextuality witness, however, it is nonzero for odd-cycle contextuality witnesses \cite{bharti,Saha2020}.
In Figure II,
we evaluate $\mathbf{H}_{\infty}$ for the 5-cycle graph, which is non zero for all $S_o > S^{(\widetilde{G},\vec{w},d)}_c$ and attains the maximum value of $0.77$. 
\begin{figure}[h!]
	\includegraphics[width=0.45\textwidth]{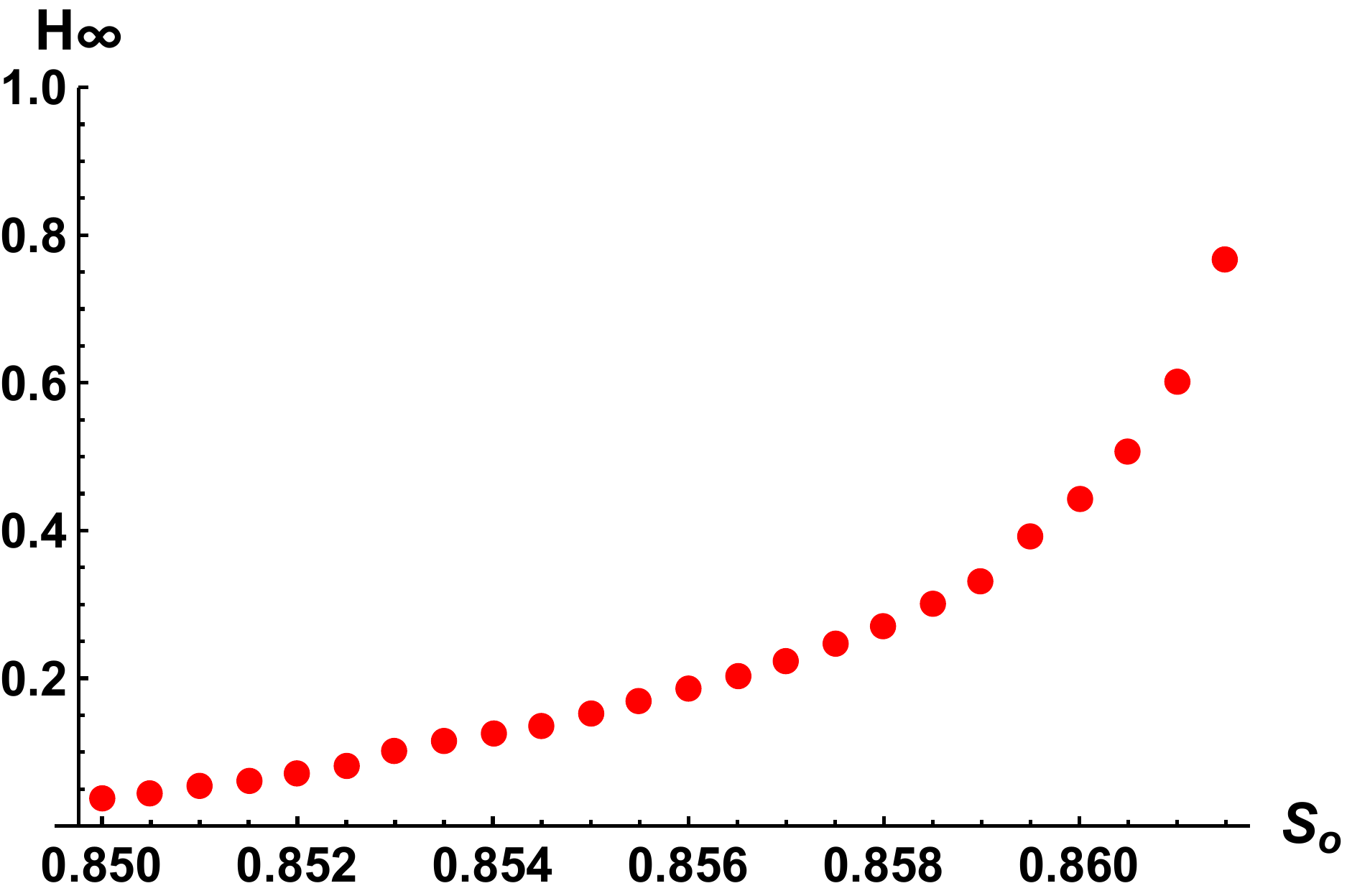}
\caption{
Vertical axis represents the randomness quantified by $\mathbf{H}_{\infty}$ and horizontal axis denotes the obtained figure of merit ($S_o \in (S_c, S_{\beta}]$). The contextual scenario under consideration is represented by the 5-cycle graph. 
The amount of randomness $H_{\infty}$ is found to be nonzero whenever the figure of merit is greater than the classical upper bound $S_c$. The maximum randomness ($H_{\infty} = 0.77$) corresponds to the optimal figure of merit in the quantum case $(S_{\beta})$. 
}
\label{randomfig}
\end{figure} 


\section{Practical applications of equality problems considered here}\label{app:ep}
The equality problems, given by Eq. (11), have direct practical applications in distributed computation. Same data stored in two stations get altered due to various reasons. Therefore, in distributed computation, it is often required to verify whether the data set in two sites is the same or not. We can consider $x$ and $y$ to be the data set variables in two stations, such that the variables can change to some specific variables due to the error. Now we can express this problem as an equality problem by a graph wherein the variables representing two adjacent vertices can interchange with each other. Communication complexity provides the minimum communication cost to verify whether the two variables are identical \cite{ccbook,rao_yehudayoff_2020}.\\
Apart from this, communication complexity of equality problems provides upper bounds on query complexity, the complexity of checking two variables are the same or not in a single device \cite{ccbook}. Streaming algorithms, in which all the inputs cannot be processed due to memory constraints, are also modeled as one-way communication complexity problems \cite{TRbook}. Furthermore, communication complexity of any equality problem has a natural application in game theory involving two agents \cite{TRbook}.


\end{document}